\documentclass[12pt,pra,aps,amssymb,amsfonts,amsmath,tightenlines]{revtex4}
\usepackage{dsfont}
\usepackage{graphicx}
\usepackage{amssymb,amsfonts,amsthm}
\usepackage{color}
\usepackage{verbatim}
\usepackage[normalem]{ulem}
\usepackage{enumerate}
\usepackage{mathrsfs}
\usepackage{bm,float}
\newcommand{\diff}{\mathrm{d}}

\newtheorem{Proposition}{Proposition}
\newtheorem{Theorem}{Theorem}

\newtheorem{Lemma}{Lemma}
\newtheorem{Definition}{Definition}

\usepackage{nicefrac}
\newcommand{\Var}{\text{Var}}

\newcommand{\re}{\mbox{$\rm e$}}

\newcommand{\rd}{\mbox{$\rm d$}}

\begin{document}

\title{Pricing with Variance Gamma Information}

\author{ Lane~P.~Hughston$^1$ and  Leandro S\'anchez-Betancourt${}^2$}

\affiliation{
$^1$Department of Computing, Goldsmiths College, University of London,\\ New Cross, London SE14\,6NW, United Kingdom \\
$^2$Mathematical Institute, University of Oxford,  Oxford OX2 6GG, United Kingdom
}

\begin{abstract}
\noindent 
In the information-based pricing framework of Brody, Hughston \& Macrina, the market filtration  $\{ \mathcal F_t\}_{t\geq 0}$ is generated by an information process $\{ \xi_t\}_{t\geq0}$ defined in such a way that at some fixed time $T$ an $\mathcal F_T$-measurable random variable $X_T$ is ``revealed". A cash flow $H_T$ is taken to depend on the market factor $X_T$, and one considers the valuation of a financial asset that delivers $H_T$ at $T$. The value of the asset $S_t$  at any time $t\in[0,T)$ is the discounted conditional expectation of $H_T$ with respect to $\mathcal F_t$, where the expectation is under the risk neutral measure and the interest rate is constant. Then $S_{T^-} = H_T$, and $S_t = 0$ for $t\geq T$. In the general situation one has a countable number of cash flows, and each cash flow can depend on a vector of market factors, each associated with an information process. In the present work we introduce a new process, which we call the normalized variance-gamma bridge. We show that the normalized variance-gamma bridge and the associated gamma bridge are jointly Markovian. From these processes, together with the specification of a market factor $X_T$, we construct a so-called variance-gamma information process.  The filtration is then taken to be generated by the information process together with the gamma bridge. We show that the resulting extended information process has the Markov property and hence can be used to develop pricing models for a variety of different financial assets, several examples of which are discussed in detail.
\vspace{-0.2cm}
\\
\begin{center}
{\scriptsize {\bf Key words: Information-based asset pricing, L\'evy processes, gamma processes, \\ variance gamma processes, Brownian bridges, gamma bridges, 
nonlinear filtering.
} }
\end{center}
\end{abstract}

\maketitle
\section{Introduction}
\label{Introduction} 

\noindent
The theory of information-based asset pricing put forward by Brody, Hughston \& Macrina \cite{Macrina2006, BHM2007, BHM2008, BHM2008dam} is concerned with the determination of the price processes of financial assets from first principles. In particular, the market filtration is constructed explicitly, rather than simply assumed, as it is in traditional approaches.  The  simplest version of the model is as follows. We fix a probability space $\left(\Omega, \mathscr{F},\mathbb{P}\right)$. An asset delivers a single random cash flow $H_T$ at some specified time $T>0$, where time $0$ denotes the present. The cash flow is a function of a random variable $X_T$, which we can think of as a ``market factor'' that is in some sense revealed at time $T$. In the general situation there will be many factors and many cash flows, but for the present we assume that there is a single factor $X_T:\Omega\to \mathbb{R}$ such that the sole cash flow at time $T$ is given by $H_T=h(X_T)$ for some Borel function $h:\mathbb{R}\to \mathbb{R}^+$. For simplicity we assume that interest rates are constant and that $\mathbb{P}$ is the risk neutral measure. We require that $H_T$ should be integrable. Under these assumptions, the value of the asset at time $0$ is 

\begin{equation}
S_0=\re^{-r\,T}\,\mathbb{E}\left[h(X_T)\right],
\end{equation}
where $\mathbb{E}$ denotes expectation under $\mathbb{P}$ and $r$ is the short rate. Since the single ``dividend" is paid at time $T$, the value of the asset at any time $t\geq 0$ is of the form 

\begin{equation}
S_t=\re^{-r\,(T-t)}\,\mathds{1}_{\left\{t<T\right\}}\,\mathbb{E}\left[h(X_T)\,\big|\,\mathscr{F}_t \right],
\label{asset pricing equation}
\end{equation}
where $\{\mathscr{F}_t\}_{t\geq 0}$ is the market filtration. The task now is to model the filtration, and this will be done explicitly. 

In traditional financial modelling, the filtration is usually taken to be fixed in advance. For example, in the widely-applied Brownian-motion-driven model for financial markets, the filtration is generated by an $n$-dimensional Brownian motion. A detailed account of the Brownian framework can be found, for example, in Karatzas \& Shreve \cite{Karatzas Shreve}.  In the information-based approach, however, we do not assume the filtration to be given \textit{a priori}. Instead, the filtration is constructed in a way that specifically takes into account the structures of the information flows associated with the cash flows of  the various assets under consideration. 

In the case of a single asset generating a single cash flow, the idea is that the filtration should contain partial or ``noisy'' information about the market factor $X_T$, and hence the impending cash flow, in such a way that $X_T$ is $\mathscr{F}_T$-measurable. This can be achieved by allowing $\{\mathscr{F}_t\}$ to be generated by a so-called information process 
$\{\xi_t\}_{t\geq 0}$ with the property that for each $t$ such that $t\geq T$ the random variable $\xi_t$ is $\sigma\{{X_T}\}$-measurable. Then by constructing specific examples of c\'adl\`ag processes having this property, we are able to formulate a variety of specific models. The resulting models are  finely tuned to the structures of the assets that they represent, and therefore offer scope for a useful approach to financial risk management. 
In previous work on information-based asset pricing, where precise definitions can be found that expand upon the ideas summarized above, such models have been constructed using Brownian bridge information processes \cite{Macrina2006, BHM2007, BHM2008, Rutkowski Yu, BDF2009, BHM2010, BHM2011, FHM2012, HM2012, Menguturk 2013}, gamma bridge information processes \cite{BHM2008dam}, L\'evy random bridge information processes \cite{hoyle2010,HHM2012,HHM2015, Menguturk 2018, HMM2020}, and Markov bridge information processes \cite{MS2019}. In what follows we present a new model for the market filtration, based on the variance-gamma process. The idea is to create a two-parameter family of information processes associated with the random market factor $X_T$. One of the parameters is the information flow-rate $\sigma$. The other is an intrinsic parameter $m$ associated with the variance gamma process. In the limit as $m$ tends to infinity, the variance-gamma information process reduces to the type of Brownian bridge information process considered by Brody, Hughston \& Macrina \cite{Macrina2006, BHM2007, BHM2008}. 

The plan of the paper is as follows. In Section \ref{Gamma Subordinators} we recall properties of the gamma process, introducing the so-called scale parameter $\kappa >0$ and shape parameter $m>0$. A standard gamma subordinator is defined to be a gamma process with $\kappa = 1/m$. The mean at time $t$ of a standard gamma subordinator is $t$. In Theorem 1 we prove that an increase in the shape parameter $m$ results in a transfer of weight from the L\'evy measure of any interval $[c, d]$ in the space of jump size to the L\'evy measure of any interval $[a, b]$ such that $b-a = d-c$ and $c>a$. Thus, roughly speaking, an increase in $m$ results in an increase in the rate at which small jumps occur relative to the rate at which large jumps occur. This result concerning the interpretation of the shape parameter for a standard gamma subordinator is new as far as we are aware. 

In Section \ref{Normalized Variance-Gamma Bridge} we recall properties of the variance-gamma process and the gamma bridge, and in Definition \ref{def: Gamma} we introduce a new type of process, which we call a normalized variance-gamma bridge. This process plays an important role in the material that follows. In Lemmas \ref{Independence of Gamma and gamma} and \ref{Indep2} we work out various properties of the normalized variance-gamma bridge. Then in Theorem \ref{Markov_A_and_GammaBridge} we show that the normalized variance-gamma bridge and the associated gamma bridge are jointly Markov, a property that turns out to be crucial in our pricing theory. 
In Section \ref{Variance Gamma Information}, at Definition \ref{def: information process}, we introduce the so-called variance-gamma information process. The information process carries noisy information about the value of a market factor $X_T$ that will be revealed to the market at time $T$, where the noise is represented by the normalized variance-gamma bridge.  In equation \eqref{eq: identity information} we present a formula that relates the values of the information process at different times, and by use of that we establish in Theorem 3 that the information process and the associated gamma bridge are jointly Markov. 

In Section \ref{Asset Pricing}, we consider a market where the filtration is generated by a variance gamma information process along with the associated gamma bridge. In Lemma \ref{conditional density lemma} we work out a version of the Bayes formula in the form that we need for asset pricing in the present context. Then in  Theorem \ref{thm: value v_t formula} we present a general formula for the price process of a financial asset that at time $T$ pays a single dividend given by a function $h(X_T)$ of the market factor. In particular, the \textit{a priori} distribution of the market factor  can be quite arbitrary, specified by a measure $F_{X_T}(\rd x)$ on $\mathbb R$, and the only requirement being that $h(X_T)$ should be integrable. In Section \ref{Examples} we present a number of examples, based on various choices of the payoff function and the distribution for the market factor, the results being summarized in Propositions 1, 2, 3, and 4. We conclude with  comments on calibration, derivatives, and how one determines the trajectory of the information process from market prices.

\section{Gamma Subordinators}
\label{Gamma Subordinators}
\noindent We begin with some remarks about the gamma process. Let us as usual write $\mathbb R^+$ for the non-negative real numbers. Let $\kappa$ and $m$ be strictly positive constants. A continuous random variable $G: \Omega \to \mathbb R^+$ on a probability space $\left(\Omega,\,\mathscr{F},\,\mathbb{P}\right)$ will be said to have a gamma distribution with scale parameter $\kappa$ and shape parameter $m$ if

\begin{equation}
\mathbb{P}\left[G \in \diff x\right]    = \mathds{1}_{\left\{x>0\right\}}\,\frac{1}{\Gamma[m]}\,\kappa^{-m}\,x^{m-1}\,\re^{-x/\kappa}\,\diff x \, ,
\end{equation}
where 
\begin{equation}
\Gamma[a]=\int_0^{\infty} x^{a-1}\,\re^{-x}\,\diff x
\end{equation}
denotes the standard gamma function for $a>0$, and we recall the relation $\Gamma[a+1]=a\Gamma[a]$. A  calculation shows that $\mathbb{E}\left[G\right]=\kappa\,m$, and $\Var[G]=\kappa^2\,m$. There exists a two-parameter family of gamma processes of the form $\Gamma: \Omega \times \mathbb R^+ \to \mathbb R^+$ on $\left(\Omega,\,\mathscr{F},\,\mathbb{P}\right)$.  By a gamma process with scale $\kappa$ and shape $m$ we mean a L\'evy process $\{\Gamma_t\}_{t\geq 0}$ such that for each $t>0$ the random variable $\Gamma_t$ is gamma distributed with 

\begin{equation}
\mathbb{P}\left[\Gamma_t \in \diff x\right]    = \mathds{1}_{\left\{x>0\right\}}\,\frac{1}{\Gamma[m\,t]}\,\kappa^{-m\,t}\,x^{m\,t-1}\,\re^{-x/\kappa}\,\diff x \,.
\end{equation}
If we write $(a)_0=1$ and $(a)_k =a(a+1)(a+2) \cdots
(a+k-1)$ for the so-called Pochhammer symbol, we find that ${\mathbb E}
[\Gamma_t^n]= \kappa^n (mt)_n$. It follows that $\mathbb{E}[\Gamma_t]=\mu\,t$ and 
$\Var[\Gamma_t]=\nu^2\,t$, where $\mu=\kappa\,m$ and $\nu^2=\kappa^2\,m$, or equivalently $m=\mu^2/\nu^2$, and $\kappa=\nu^2/\mu$. 

The L\'evy exponent for such a process is given for $\alpha < 1$ by

\begin{equation}
\psi_{\Gamma}(\alpha)= \frac{1}{t} \log \mathbb E \left [\exp (\alpha \Gamma_t) \right] = -m\,\log \left(1-\kappa \alpha \right) ,
\end{equation}
and for the corresponding L\'evy measure we have

\begin{equation}
\nu_{\Gamma}(\diff x)=\mathds{1}_{\left\{x>0\right\}}\, m \,\frac{1}{x}\,\re^{-x/\kappa }\,\diff x \, .
\end{equation}
One can then check that the L\'evy-Khinchine relation

\begin{equation}
\psi_{\Gamma}(\alpha) = \int_{\mathbb R} \left(\re^{\alpha x} - 1 - \mathds{1}_{\left\{|x|<1\right\}}\, \alpha x \right) \, \nu_{\Gamma}(\diff x) + p \alpha
\end{equation}
holds for an appropriate choice of $p$  \cite[Lemma 1.7]{kyprianou2014fluctuations}. 

By a standard gamma subordinator we mean a gamma process $\{\gamma_t\}_{t\geq0}$ for which $\kappa=1/m$. This implies that $\mathbb{E}[\gamma_t]=t$ and $\Var[\gamma_t]=m^{-1}\,t$. The standard gamma subordinators thus constitute a one-parameter family of processes labelled by $m$. An interpretation of the parameter $m$ is given by the following: 

 \vspace{0.2cm}
\begin{Theorem}
Let $\{\gamma_t\}_{t\geq0}$ be a standard gamma subordinator with parameter $m$. Let $\nu_m[a,b]$ be the  L\'evy measure of the interval $[a,b]$ for $0<a<b$. Then for any interval $[c,d]$ such that $c>a$ and $d-c=b-a$ the ratio
\begin{equation}\label{eq: quotient Levy measures}
R_m(a,b\,;c,d) = \frac{\nu_m[a,b]}{\nu_m[c,d]}    
\end{equation}
is strictly greater than one and strictly increasing as a function of $m$.
\end{Theorem}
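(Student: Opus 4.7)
The plan is to reduce both claims to concrete integral inequalities by writing out the Lévy density explicitly and then using the translation $x\mapsto x+\delta$ with $\delta=c-a=d-b>0$.

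Since $\kappa=1/m$ for a standard gamma subordinator, the Lévy measure of an interval $[a,b]\subset(0,\infty)$ is
\begin{equation}
\nu_m[a,b] \;=\; \int_a^b m\,\frac{e^{-mx}}{x}\,\diff x,
\end{equation}
so the prefactor $m$ cancels in the ratio and
\begin{equation}
R_m(a,b;c,d) \;=\; \frac{\displaystyle\int_a^b \tfrac{e^{-mx}}{x}\,\diff x}{\displaystyle\int_c^d \tfrac{e^{-mx}}{x}\,\diff x}
\;=\; \frac{\displaystyle\int_a^b \tfrac{e^{-mx}}{x}\,\diff x}{\displaystyle\int_a^b \tfrac{e^{-m(x+\delta)}}{x+\delta}\,\diff x},
\end{equation}
after substituting $x\mapsto x+\delta$ in the denominator. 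The inequality $R_m>1$ is then immediate, because for every $x\in[a,b]$ one has $e^{-mx}>e^{-m(x+\delta)}$ and $1/x>1/(x+\delta)$, so the top integrand strictly dominates the bottom one pointwise.

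For the monotonicity I would differentiate $\log R_m$ under the integral sign. Using $\partial_m \log\bigl(e^{-mx}/x\bigr)=-x$, one obtains
\begin{equation}
\frac{\diff}{\diff m}\log R_m \;=\; \Exp^{(2)}_m[X] - \Exp^{(1)}_m[X],
\end{equation}
where $\Exp^{(i)}_m$ denotes expectation under the probability density on $I_i$ proportional to $e^{-mx}/x$, with $I_1=[a,b]$ and $I_2=[c,d]$. So the claim reduces to showing that shifting the interval to the right increases the mean of this family of distributions. Writing both means with a common numerator via the identity $x\cdot(e^{-mx}/x)=e^{-mx}$ and the same substitution $x\mapsto x+\delta$ gives
\begin{equation}
\Exp^{(2)}_m[X]-\Exp^{(1)}_m[X]
\;=\;\Bigl(\int_a^b e^{-my}\,\diff y\Bigr)\!\left[\frac{1}{\int_a^b \tfrac{e^{-my}}{y+\delta}\diff y}-\frac{1}{\int_a^b \tfrac{e^{-my}}{y}\diff y}\right],
\end{equation}
and the bracket is strictly positive because $1/(y+\delta)<1/y$ on $[a,b]$. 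Hence $\frac{\diff}{\diff m}\log R_m>0$, and $R_m$ is strictly increasing in $m$.

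The main obstacle I anticipate is not the computation itself but making sure the argument is valid whether or not the intervals overlap (the hypothesis $c>a$, $d-c=b-a$ allows $c\le b$). The log-derivative formulation above is robust to this: once everything is rewritten as integrals over the common interval $[a,b]$ via the shift $\delta$, the two cases are treated uniformly, and differentiation under the integral sign is justified by the smoothness and integrability of $e^{-mx}/x$ on a bounded interval bounded away from $0$.
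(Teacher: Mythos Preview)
Your argument is correct and takes a genuinely different route from the paper. Both proofs handle $R_m>1$ the same way, via pointwise dominance of the integrand after the shift $x\mapsto x+\delta$. For monotonicity, however, the paper differentiates $R_m$ itself, rewrites $\nu_m[a,b]$ in terms of the exponential integral $E_1$, and then shows that the sign of $\partial_m R_m$ is governed by the auxiliary inequality $R_m>e^{m(c-a)}$, which it establishes by a separate integral comparison. Your approach of differentiating $\log R_m$ instead is cleaner: the identity $x\cdot(e^{-mx}/x)=e^{-mx}$ makes the derivative a difference of means, and after the same shift both means share the numerator $\int_a^b e^{-my}\,\diff y$, so positivity follows in one line from $1/(y+\delta)<1/y$. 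This avoids the exponential integral and the intermediate inequality entirely, at the cost of a slightly more abstract probabilistic framing; the paper's route is more computational but yields the sharper side result $R_m>e^{m\delta}$ along the way.
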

\begin{proof}
By the definition of a standard gamma subordinator we have  

\begin{equation}\label{eq: gamma levy measure a,b}
\nu_m[a,b]=\int_a^b m\,\frac{1}{x}\,\re^{-m\,x}\,\diff x\,.
\end{equation}
Let $\delta=c-a>0$ and note that the integrand in the right hand side of \eqref{eq: gamma levy measure a,b} is a decreasing function of the variable of integration. This allows one to conclude that 

\begin{equation}
\nu_m[a+\delta,b+\delta]=\int_{a+\delta}^{b+\delta} m\,\frac{1}{x}\,\re^{-m\,x}\,\diff x <  \int_a^b m\,\frac{1}{x}\,\re^{-m\,x}\,\diff x\,,
\end{equation}
from which it follows that
$0<\nu_m[c,d]<\nu_m[a,b]$ and hence $R_m(a,b\,;c,d) > 1$. 
To show that $R_m(a,b;c,d)$ is strictly increasing as a function of $m$ we observe that 

\begin{equation}
\nu_m[a,b]= m \int_a^{\infty} \frac{1}{x}\,\re^{-m\,x}\,\diff x - m \int_b^{\infty} \frac{1}{x}\,\re^{-m\,x}\,\diff x\, 
= m\,\left(E_1[m\,a]-E_1[m\,b]\right),
\end{equation}
where the so-called exponential integral function $E_1(z)$ is defined for $z>0$ by

\begin{equation}
E_1(z)=\int_z^{\infty} \frac{\re^{-x}}{x}\diff x\,.
\end{equation}

\noindent See reference  \cite{AS1970}, Section 5.1.1, for properties of the exponential integral. Next, we compute the derivative of $R_m(a,b\,;c,d)$, which gives

\begin{align}\label{derivative of quotient}
\frac{\partial}{\partial m}R_m(a,b\,;c,d)=\frac{1}{m\,\left(E_1[m\,c]-E_1[m\,d]\right)}\,\re^{-m\,a}\,\left(1-\re^{-m\,\Delta}\right)\,\left(R_m(a,b\,;c,d)-\re^{m(c-a)}\right),
\end{align}
where

\begin{equation}
\Delta=d-c=b-a \,.
\label{Delta}
\end{equation}
We note that 

\begin{equation}
\frac{1}{m\,\left(E_1[m\,c]-E_1[m\,d]\right)}\,\re^{-m\,a}\,\left(1-\re^{-m\,\Delta}\right) \,>\, 0\,,
\end{equation}
which shows that the sign of the derivative in \eqref{derivative of quotient} is strictly positive if and only if

\begin{equation}
R_m(a,b\,;c,d) \,>\, \re^{m(c-a)}.
\label{R inequality}
\end{equation}
But clearly

\begin{equation}
\int_0^{\Delta\,m}\frac{\re^{-u}}{u+a\,m}\,\diff u \,>\, \int_0^{\Delta\,m}\frac{\re^{-u}}{u+c\,m}\,\diff u\,
\end{equation}
for $c>a$, which after a change of integration variables and use of \eqref{Delta} implies 

\begin{equation}
\re^{m\,a}\,\int_{a\,m}^{b\,m}\frac{\re^{-x}}{x}\,\diff x \,>\, \re^{m\,c}\,\int_{c\,m}^{d\,m}\frac{\re^{-x}}{x}\,\diff x ,
\end{equation}
which is equivalent to \eqref{R inequality}, and that completes the proof.
\end{proof}
\noindent We see therefore that the effect of an increase in the value of $m$ is to transfer weight from the L\'evy measure of any jump-size interval $[c,d] \subset \mathbb{R}^+$ to any possibly-overlapping smaller-jump-size  interval $[a,b] \subset \mathbb{R}^+$ of the same length. The L\'evy measure of such an interval is the rate of arrival of jumps for which the jump size lies in that interval.

\section{Normalized Variance-Gamma Bridge}
\label{Normalized Variance-Gamma Bridge} 
\noindent Let us fix a standard Brownian motion $\{W_t\}_{t\geq 0}$ on $\left(\Omega,\,\mathscr{F},\,\mathbb{P}\right)$ and an independent standard gamma subordinator $\{\gamma_t\}_{t\geq0}$ with parameter $m$. By a standard variance-gamma process with parameter $m$ we mean a time-changed Brownian motion $\{V_t\}_{t\geq0}$ of the form

\begin{equation}
V_t=W_{\gamma_t}\,.
\end{equation}
It is straightforward to check that $\{V_t\}$ is itself a L\'evy process, with L\'evy exponent 

\begin{equation}
\psi_V(\alpha)=-m\,\log \left(1-\frac{\alpha^2}{2\,m}\right)\,.
\end{equation}
Properties of the variance-gamma process, and financial models based on it, have been investigated extensively in \cite{Madan 1990, Madan Milne 1991, Madan Carr Chang 1998, Carr Geman Madan Yor 2002} and many other works.

The other object we require going forward is the gamma bridge \cite{BHM2008dam, Emery Yor 2004, Yor 2007}. 
Let $\{\gamma_t\}$ be a standard gamma subordinator with parameter $m$. For fixed $T>0$ the process $\{\gamma_{tT}\}_{t\geq 0}$ defined by

\begin{equation}\label{eq: gamma bridge}
    \gamma_{tT}=\frac{\gamma_t}{\gamma_T}
\end{equation}
for $0 \leq t \leq T$ and $\gamma_{tT} = 1$ for $t>T$ will be called a standard gamma bridge, with parameter $m$, over the interval $[0,T]$. One can check that for $0< t< T$ the random variable $\gamma_{tT}$ has a beta
distribution  \cite[pp.~6-9]{BHM2008dam}. In particular, one finds that its density is given by

\begin{eqnarray}
\mathbb{P}\left[\gamma_{tT} \in \diff y\right]  ={\mathds 1}_{\{0<y<1\}} \,
\frac{\,y^{mt-1}(1-y)^{m(T-t)-1}\,}{{B}[mt,m(T-t)]} \,\diff y \, ,
\end{eqnarray}
where

\begin{eqnarray}
{B}[a,b] = \frac{\,\Gamma[a] \, \Gamma[b]\,}{\Gamma[a+b]} \, .
\end{eqnarray}
It follows then by use of the
integral formula

\begin{eqnarray}
{B}[a,b]=\int^1_0\,y^{a-1}(1-y)^{b-1}\rd y 
\end{eqnarray}
that for all $n \in \mathbb N$ we have

\begin{eqnarray}
{\mathbb E}\left[\gamma_{tT}^{n}\right] = \frac{{B}
[mt+n,m(T-t)]}{{B}[mt,m(T-t)]} ,
\end{eqnarray}
and hence

\begin{eqnarray}
{\mathbb E}\left[\gamma_{tT}^{n}\right]  = \frac{(mt)_n}{(mT)_n} \, .
\end{eqnarray}
Accordingly, one has 

\begin{eqnarray}\label{eq: first and second moment of bridge}
{\mathbb E}[\gamma_{tT}]
=t/T\,,\quad {\mathbb E} [\gamma_{tT}^2]=t(mt+1)/T(mT+1)
\end{eqnarray}
and therefore

\begin{equation}
\textrm{Var}[\gamma_{tT}]= \frac{t(T-t)} {T^2(1+mT)} \,.
\end{equation}
One observes, in particular, that the
expectation of $\gamma_{tT}$ does not depend on $m$,
whereas the variance of $\gamma_{tT}$ decreases as $m$ increases.

 \vspace{0.2cm}

\begin{Definition}\label{def: Gamma}
For fixed $T>0$,  the process $\left\{\Gamma_{tT}\right\}_{t \geq0}$ defined by
\begin{equation}\label{eq: VG bridge}
\Gamma_{tT}={{\gamma_T}^{-\frac{1}{2}}}\,\left(W_{\gamma_t}-\gamma_{tT}\,W_{\gamma_T}\right)
\end{equation}
for $0 \leq t \leq T$ and $\Gamma_{tT} = 0$ for $t>T$ will be called a \textit{normalized variance gamma bridge}.
\end{Definition}

We proceed to work out various properties of this process. We observe that $\Gamma_{tT}$ is conditionally Gaussian, from which it follows that $\mathbb{E}\left[\Gamma_{tT} \mid \gamma_t,\, \gamma_T\right]=0$ and $\mathbb{E}\left[\Gamma^2_{tT} \mid \gamma_t,\, \gamma_T\right]=\gamma_{tT} \left(1-\gamma_{tT}\right)$. Therefore $\mathbb E[\Gamma_{tT}] = 0$  and $\mathbb E[\Gamma^2_{tT}]=\mathbb E[\gamma_{tT}]-\mathbb E[\gamma^2_{tT}]$\,; and thus by use of \eqref{eq: first and second moment of bridge} we have 

\begin{equation}
{\rm Var}\,[\Gamma_{tT}] =  \frac{\,m t \, (T-t)\,} {T\,(1 + mT)} \,.
\end{equation}
Now,  recall \cite{Yor 2007,Emery Yor 2004} that the gamma process and the associated gamma bridge have the following fundamental independence property. Define 

\begin{equation}
\mathscr{G}^*_t=\sigma\left\{\gamma_s/\gamma_t,\,\,s\in [0,t] \right\}\,,\quad \mathscr{G}^+_t=\sigma\left\{\gamma_u,\,\,u\in [t,\infty) \right\}\,.
\end{equation}
Then, for every $t\geq 0$ it holds that $\mathscr{G}^*_t$ and $\mathscr{G}^+_t$ are independent. In particular  $\gamma_{st}$ and $\gamma_{u}$ are independent for $0\leq s\leq t\leq u$ and $t>0$. It also holds that $\gamma_{st}$ and $\gamma_{uv}$ are independent for $0 \leq s\leq t\leq u \leq v$ and $t>0$. Furthermore, we have:
 \vspace{0.2cm}
 
\begin{Lemma}
If $0\leq s\leq t\leq u$ and $t>0$ then  $\Gamma_{st}$ and ${\gamma_{u}}$ are independent. 
\label{Independence of Gamma and gamma}
\end{Lemma}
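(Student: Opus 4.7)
The plan is to condition on the $\sigma$-algebra generated by the entire gamma process and exploit the fundamental independence of $\mathscr{G}^*_t$ and $\mathscr{G}^+_t$ recalled just before the lemma. Write $\mathscr{F}^{\gamma}=\sigma\{\gamma_r:r\geq 0\}$. The crucial observation is that $\Gamma_{st}$, which is built out of $W$ and $\gamma$, has a conditional law given $\mathscr{F}^{\gamma}$ that depends on the gamma process only through the ratio $\gamma_{st}=\gamma_s/\gamma_t\in\mathscr{G}^*_t$.

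First I would compute the conditional law of $\Gamma_{st}$ given $\mathscr{F}^{\gamma}$. Since $W$ is independent of $\gamma$, conditional on $\mathscr{F}^{\gamma}$ the pair $(W_{\gamma_s},W_{\gamma_t})$ is centred jointly Gaussian with covariance matrix determined by $\gamma_s,\gamma_t$ and $s\le t$. A short calculation gives that $W_{\gamma_s}-(\gamma_s/\gamma_t)W_{\gamma_t}$ is conditionally Gaussian with variance $\gamma_s(\gamma_t-\gamma_s)/\gamma_t$. Dividing by $\gamma_t^{1/2}$ yields
\begin{equation}
\Gamma_{st}\mid \mathscr{F}^{\gamma}\;\sim\;\mathcal{N}\!\bigl(0,\,\gamma_{st}(1-\gamma_{st})\bigr),
\end{equation}
which depends on $\mathscr{F}^{\gamma}$ only through $\gamma_{st}$. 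This is the place where the $\gamma_T^{-1/2}$ normalization in Definition \ref{def: Gamma} does its work: the separate factors of $\gamma_t$ cancel and only the ratio survives.

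Next I would combine this with the independence property. For bounded measurable $f,g$, conditioning on $\mathscr{F}^{\gamma}$ gives
\begin{equation}
\mathbb{E}\bigl[f(\Gamma_{st})\,g(\gamma_u)\bigr]\;=\;\mathbb{E}\!\left[g(\gamma_u)\,F(\gamma_{st})\right],\qquad F(z):=\mathbb{E}\!\left[f\!\left(Z\sqrt{z(1-z)}\right)\right],
\end{equation}
where $Z\sim\mathcal{N}(0,1)$. Because $u\geq t$ we have $\gamma_u\in\mathscr{G}^+_t$ and $\gamma_{st}\in\mathscr{G}^*_t$, so the recalled independence of $\mathscr{G}^*_t$ and $\mathscr{G}^+_t$ factorises the right-hand side as $\mathbb{E}[F(\gamma_{st})]\,\mathbb{E}[g(\gamma_u)]=\mathbb{E}[f(\Gamma_{st})]\,\mathbb{E}[g(\gamma_u)]$. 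Since $f,g$ are arbitrary bounded measurable functions, this proves the independence of $\Gamma_{st}$ and $\gamma_u$.

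I do not expect a serious obstacle: the whole argument hinges on a single structural observation, namely that the conditional variance $\gamma_{st}(1-\gamma_{st})$ is a function of the ratio alone. The only point requiring a line of care is the edge case $s=0$ or $s=t$, where $\Gamma_{st}=0$ almost surely and the statement is trivial, so the proof may be written for $0<s<t$ without loss of generality.
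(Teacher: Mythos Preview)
Your proof is correct and follows essentially the same route as the paper: both arguments condition on the gamma process, use that $\Gamma_{st}$ is conditionally centred Gaussian with variance $\gamma_{st}(1-\gamma_{st})$ (depending only on the ratio), and then invoke the independence of $\gamma_{st}\in\mathscr{G}^*_t$ and $\gamma_u\in\mathscr{G}^+_t$ to factorise. The only cosmetic differences are that the paper conditions on the finite collection $\gamma_s,\gamma_t,\gamma_u$ and works with indicator functions to compute the joint distribution function directly, whereas you condition on the full $\sigma$-algebra $\mathscr{F}^{\gamma}$ and test against arbitrary bounded measurable $f,g$; the mathematical content is identical.
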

\begin{proof}
We recall that if a random variable $X$ is normally distributed with mean $\mu$ and variance $\nu^2$ then

\begin{equation}
\mathbb{P}\left[X<x\right]=N\left(\frac{x-\mu}{\nu}\right),
\end{equation}
where $N:\mathbb{R}\to (0,1)$ is defined by

\begin{equation}
N(x)= \frac{1}{\sqrt{2 \pi} } \int_{-\infty}^{x} \exp \left(-\frac{1}{2}\,y^2\right)\,\diff y\,.
\label{normal distribution function}
\end{equation}
Since $\Gamma_{tT}$ is conditionally Gaussian, by use of the tower property we find that

\begin{align}
F_{\,\Gamma_{st},\,\gamma_{u}}(x,y)&=\mathbb{E}\left[\mathds{1}_{\{\Gamma_{st}\leq x\}}\,\mathds{1}_{\{\gamma_{u}\leq y\}}\right] \nonumber\\
&=\mathbb{E}\left[ \mathbb{E}\left[ \mathds{1}_{\{\Gamma_{st}\leq x\}}\,\mathds{1}_{\{\gamma_{u}\leq y\}}\,\middle|\,\gamma_s\,,\gamma_t\,,\gamma_{u} \right]\right]\nonumber\\
&=\mathbb{E}\left[\mathds{1}_{\{\gamma_{u}\leq y\}}\, \mathbb{E}\left[ \mathds{1}_{\{\Gamma_{st}\leq x\}}\,\middle|\,\gamma_s\,,\gamma_t\,,\gamma_{u} \right]\right]\nonumber\\
&=\mathbb{E}\left[\mathds{1}_{\{\gamma_{u}\leq y\}}\,N\left({x}\,{\left(\gamma_{st}\,\left({1-\gamma_{st}}\right)\right)^{-\frac{1}{2}}}\right) \right]\nonumber\\
&=\mathbb{E}\left[\mathds{1}_{\{\gamma_{u}\leq y\}}\right]\mathbb{E}\left[N\left({x}\,{\left(\gamma_{st}\,\left({1-\gamma_{st}}\right)\right)^{-\frac{1}{2}}}\right) \right],
\end{align}
where the last line follows from the independence of  $\gamma_{st}$ and $\gamma_{u}$.
\end{proof}
\vspace{0.1cm}
\noindent By a straightforward extension of the argument we deduce that if $0\leq s\leq t\leq u \leq v$ and $t>0$ then $\Gamma_{st}$ and ${\gamma_{uv}}$ are independent. Further, we have:

\vspace{0.1cm}
\begin{Lemma} \label{Indep2}
If  $\,0\leq s\leq t\leq u \leq v$ and $t>0$ then $\Gamma_{st}$ and $\Gamma_{uv}$ are independent.
\end{Lemma}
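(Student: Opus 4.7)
The plan is to condition on the $\sigma$-algebra generated by the entire gamma subordinator $\{\gamma_r\}_{r \geq 0}$, exploit the resulting conditional Gaussianity, reduce the problem to a covariance computation, and then factor the joint characteristic function using the independence property of the gamma bridge stated just before the lemma.

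First I would observe that, since $\{W_r\}$ is independent of $\{\gamma_r\}$, conditional on $\sigma(\gamma) := \sigma(\{\gamma_r\}_{r\geq 0})$ the vector $(W_{\gamma_s}, W_{\gamma_t}, W_{\gamma_u}, W_{\gamma_v})$ is jointly Gaussian with mean zero and covariance $\mathrm{Cov}(W_{\gamma_a}, W_{\gamma_b} \mid \sigma(\gamma)) = \min(\gamma_a, \gamma_b)$. Both $\Gamma_{st}$ and $\Gamma_{uv}$ are, conditionally on $\sigma(\gamma)$, linear combinations of these Gaussians whose coefficients are $\sigma(\gamma)$-measurable, so $(\Gamma_{st}, \Gamma_{uv})$ is conditionally jointly Gaussian with (as computed in the paragraph following Definition \ref{def: Gamma}) conditional marginal variances $\gamma_{st}(1-\gamma_{st})$ and $\gamma_{uv}(1-\gamma_{uv})$.

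Next I would compute the conditional covariance. Using $s \leq t \leq u \leq v$ and the covariance formula $\min(\gamma_a, \gamma_b)$, a direct expansion gives
\begin{align}
\mathrm{Cov}(\Gamma_{st}, \Gamma_{uv} \mid \sigma(\gamma))
&= \gamma_t^{-1/2}\gamma_v^{-1/2}\bigl[\gamma_s - \gamma_{uv}\gamma_s - \gamma_{st}\gamma_t + \gamma_{st}\gamma_{uv}\gamma_t\bigr] \nonumber \\
&= \gamma_t^{-1/2}\gamma_v^{-1/2}(1 - \gamma_{uv})(\gamma_s - \gamma_{st}\gamma_t),
\end{align}
and the second factor vanishes because $\gamma_{st}\gamma_t = (\gamma_s/\gamma_t)\gamma_t = \gamma_s$. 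Hence the conditional covariance is zero, so $\Gamma_{st}$ and $\Gamma_{uv}$ are conditionally independent given $\sigma(\gamma)$.

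To lift this to unconditional independence I would examine the joint characteristic function. Using the tower property and the conditional bivariate Gaussian density, together with the vanishing cross term,
\begin{equation}
\mathbb{E}\bigl[\re^{\ri \alpha \Gamma_{st} + \ri \beta \Gamma_{uv}}\bigr] = \mathbb{E}\Bigl[\exp\bigl(-\tfrac{1}{2}\alpha^2 \gamma_{st}(1-\gamma_{st}) - \tfrac{1}{2}\beta^2 \gamma_{uv}(1-\gamma_{uv})\bigr)\Bigr].
\end{equation}
Now I invoke the independence property of the gamma bridge recalled immediately before the lemma: since $s \leq t \leq u \leq v$, the random variables $\gamma_{st}$ and $\gamma_{uv}$ are independent. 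Therefore the last expectation factorises into the product of the two individual marginal characteristic functions of $\Gamma_{st}$ and $\Gamma_{uv}$ (obtained by setting $\beta=0$ and $\alpha=0$ in turn), which by Lévy's uniqueness theorem yields the desired independence.

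The main subtlety is not analytical but structural: one must resist computing with mixtures of Gaussians directly and instead exploit the two-stage independence (Brownian versus gamma, and then intra-gamma via the Yor–Émery property). The algebraic cancellation $\gamma_s - \gamma_{st}\gamma_t = 0$ is the single nontrivial step, and everything else reduces to an application of the bridge independence already in hand.
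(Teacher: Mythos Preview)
Your proof is correct and follows essentially the same architecture as the paper's: condition on the gamma process, use conditional Gaussianity to obtain conditional independence, and then factor using the known independence of $\gamma_{st}$ and $\gamma_{uv}$. The differences are only in execution: the paper conditions on the finite vector $(\gamma_s,\gamma_t,\gamma_u,\gamma_v)$, appeals to ``properties of the Brownian bridge'' for the conditional independence step, and works with joint distribution functions, whereas you condition on $\sigma(\gamma)$, verify the conditional covariance vanishes by the explicit cancellation $\gamma_s-\gamma_{st}\gamma_t=0$, and work with characteristic functions. Your explicit covariance computation makes the Brownian-bridge independence step entirely transparent, which is a mild expository gain; otherwise the two arguments are interchangeable.
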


\noindent {\em Proof.} We recall that the Brownian bridge $\{\beta_{tT}\}_{0\leq t\leq T}$ defined by

\begin{equation}
\beta_{tT}=W_t-\frac{t}{T}\,W_T
\end{equation}
for $0 \leq t \leq T$ and $\beta_{tT} = 0$ for $t>T$ is Gaussian with $\mathbb{E}\left[\beta_{tT}\right]=0$, 
$\text{Var}\left[\beta_{tT}\right]=t\,(T-t)/T$, and $\text{Cov}\left[\beta_{sT},\,\beta_{tT}\right]=s(T-t)/T$ for $0\leq s\leq t \leq T$. Using the tower property we find that

\begin{align}
F_{\Gamma_{st},\,\Gamma_{uv}}(x,y)&=\mathbb{E}\left[\mathds{1}_{\{\Gamma_{st}\leq x\}}\,\mathds{1}_{\{\Gamma_{uv}\leq y\}}\right]\nonumber\\
&=\mathbb{E}\left[ \mathbb{E}\left[ \mathds{1}_{\{\Gamma_{st}\leq x\}}\,\mathds{1}_{\{\Gamma_{uv}\leq y\}}\,\middle|\,\gamma_s\,,\gamma_t\,,\gamma_{u}\,,\gamma_{v}\right]\right]\nonumber\\
&=\mathbb{E}\left[ \mathbb{E}\left[ \mathds{1}_{\{\Gamma_{st}\leq x\}}\,\middle|\,\gamma_s\,,\gamma_t\,,\gamma_{u}\,,\gamma_{v}\right]\,\mathbb{E}\left[ \mathds{1}_{\{\Gamma_{uv}\leq y\}}\,\middle|\,\gamma_s\,,\gamma_t\,,\gamma_{u}\,,\gamma_{v}\right]\right]\nonumber\\
&=\mathbb{E}\left[ N\left({x}\,{\left(\left({1-\gamma_{st}}\right)\left(\gamma_{st}\right)\right)^{-\frac{1}{2}}}\right)\right] \,\mathbb{E}\left[ N\left({y}\,{\left(\left({1-\gamma_{uv}}\right)\left(\gamma_{uv}\right)\right)^{-\frac{1}{2}}}\right)\right] ,
\end{align}
where in the final step we use \eqref{eq: VG bridge} along with properties of the Brownian bridge.

 \hfill $\Box$\\
 
\vspace{0.2cm}

\noindent A straightforward calculation shows that if $0\leq s\leq t\leq u$ and $t>0$ then

\begin{equation}\label{eq: Gamma identity}
    \Gamma_{su}=\left({\gamma_{tu}}\right)^{\frac{1}{2}}\,\Gamma_{st}+{\gamma_{st}}\,\Gamma_{tu}\,.
\end{equation}

\noindent With this result at hand we obtain the following:

\vspace{0.2cm}
\begin{Theorem}\label{Markov_A_and_GammaBridge}
The processes $\{\Gamma_{tT}\}_{0\leq t \leq T}$ and $\{\gamma_{tT}\}_{0\leq t \leq T}$ are jointly Markov.
\end{Theorem}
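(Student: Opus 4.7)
The plan is to establish the joint Markov property by reducing it to a conditional-independence statement that relates the ``past'' bridge-of-bridge increments to the ``forward'' increment at the pivot time $s$. Throughout, fix $0 \leq s \leq t \leq T$ and denote the joint past filtration by $\mathcal{H}_s := \sigma\{\Gamma_{uT}, \gamma_{uT}: u \in [0, s]\}$.

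First, I would apply the identity \eqref{eq: Gamma identity} to the triple $(u, s, T)$ for any $u \in [0, s]$, together with the gamma-bridge factorisation $\gamma_{uT} = \gamma_{us}\gamma_{sT}$, to obtain
$$\Gamma_{uT} = \gamma_{sT}^{1/2}\,\Gamma_{us} + \gamma_{us}\,\Gamma_{sT}, \qquad \gamma_{uT} = \gamma_{us}\,\gamma_{sT}.$$
These relations are invertible, so
$$\mathcal{H}_s = \sigma(\Gamma_{sT}, \gamma_{sT}) \vee \sigma\{\Gamma_{us}, \gamma_{us}: u \in [0, s]\}.$$
A second application of \eqref{eq: Gamma identity} to the triple $(s, t, T)$, combined with $\gamma_{sT} = \gamma_{st}\gamma_{tT}$, exhibits $(\Gamma_{tT}, \gamma_{tT})$ as an explicit measurable function of $(\Gamma_{sT}, \gamma_{sT}, \Gamma_{st}, \gamma_{st})$.

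The crux is then to show that, conditional on $(\Gamma_{sT}, \gamma_{sT})$, the forward pair $(\Gamma_{st}, \gamma_{st})$ is independent of the backward family $\{(\Gamma_{us}, \gamma_{us}): u \in [0, s]\}$. For the gamma coordinate this is immediate from the Yor--\'Emery fundamental independence recorded above: $\gamma_{us} \in \mathscr{G}^*_s$ for each $u \leq s$, whereas $\gamma_{st} = \gamma_{sT}/\gamma_{tT}$ is, given $\gamma_{sT}$, a function of $\gamma_{tT} \in \mathscr{G}^+_s$, and $\mathscr{G}^*_s$ and $\mathscr{G}^+_s$ are independent. For the variance-gamma coordinate, I would condition on the entire gamma trajectory and exploit the standard Brownian-bridge decomposition: given $\{\gamma_u\}$, the process $\beta(x) := W_x - (x/\gamma_T)W_{\gamma_T}$ is a Brownian bridge on $[0, \gamma_T]$, and, conditional on $\beta(\gamma_s)$, the two sub-bridges on $[0, \gamma_s]$ and $[\gamma_s, \gamma_T]$ are independent. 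Since $\{\Gamma_{us}: u \leq s\}$ is a measurable function of the first sub-bridge, while $\Gamma_{st}$ and $\Gamma_{tT}$ depend only on $\beta(\gamma_s)$ together with the second sub-bridge, the desired conditional independence holds at the level of the conditioning on $\gamma$.

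Finally, I would assemble the argument through a tower computation: for a bounded measurable test function $f$,
$$\mathbb{E}\!\left[f(\Gamma_{tT}, \gamma_{tT}) \mid \mathcal{H}_s\right] = \mathbb{E}\!\left[\mathbb{E}\!\left[f(\Gamma_{tT}, \gamma_{tT}) \mid \mathcal{H}_s \vee \sigma\{\gamma_u\}\right] \mid \mathcal{H}_s\right];$$
the inner expectation collapses, via the Brownian-bridge Markov property, to a function of $(\gamma_s, \gamma_t, \gamma_T, \Gamma_{sT})$, and the outer averaging integrates out $(\gamma_s, \gamma_t, \gamma_T)$ using the Yor--\'Emery splitting, leaving a function of $(\Gamma_{sT}, \gamma_{sT})$ alone. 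The main obstacle is the coordination of the two independence structures in this last step --- making the Yor--\'Emery gamma splitting compatible with the Brownian-bridge Gaussian splitting so that no residual dependence on the past bridge data survives the averaging over $\gamma$. Once this is verified, the joint Markovianity of $(\Gamma_{tT}, \gamma_{tT})$ follows.
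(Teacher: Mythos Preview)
Your approach is correct and would yield a valid proof, but it is organized differently from the paper's. The paper works with finitely many past times $t_n<\dots<t_1<t$, uses the identity \eqref{eq: Gamma identity} to replace $(\Gamma_{t_k T},\gamma_{t_k T})$ by the sub-bridge data $(\Gamma_{t_k t_1},\gamma_{t_k t_1})$, and then shows directly---by computing joint distribution functions and invoking Lemmas~\ref{Independence of Gamma and gamma} and~\ref{Indep2}---that this sub-bridge data is \emph{unconditionally} independent of $(\Gamma_{tT},\gamma_{tT},\Gamma_{t_1 T},\gamma_{t_1 T})$, from which the Markov property is immediate. You instead work with the full natural filtration $\mathcal H_s$, decompose it via the same identity, and run a tower that first conditions on the entire gamma trajectory (invoking the inner-bridge independence property of the Brownian bridge) and then averages over $\gamma$ via the Yor--\'Emery splitting. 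The ``obstacle'' you flag is genuine but is resolved by exactly the observation underlying the paper's Lemma~\ref{Indep2}: the conditional-on-$\gamma$ law of $(\Gamma_{us})_{u\le s}$ depends only on $(\gamma_{us})_{u\le s}\in\mathscr G^*_s$, while that of the forward block $(\Gamma_{sT},\gamma_{sT},\gamma_{tT})$ depends only on $(\gamma_{sT},\gamma_{tT})$, a function of $\mathscr G^+_s$-variables; unconditional independence of the two blocks then follows, and your outer expectation collapses to a function of $(\Gamma_{sT},\gamma_{sT})$ as required. In effect you are rederiving Lemma~\ref{Indep2} inside the Markov argument rather than isolating it beforehand. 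Your route is conceptually tidy and delivers the Markov property relative to the full filtration in one stroke; the paper's route is more explicit and packages the key independence facts as stand-alone lemmas that are reused later, notably in the proof of Theorem~\ref{MarkovInformationProcess2}.
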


\begin{proof}
To establish the Markov property it suffices to show that for any bounded measurable function  $\phi:\mathbb{R}\times\mathbb{R}\to\mathbb{R}$, any $n\in\mathbb{N}$, and any  $0\leq t_n\leq t_{n-1}\leq \,\dots\,\leq t_1\leq t \leq T$, we have

\begin{align}
&\mathbb{E}\left[\phi(\Gamma_{tT},\gamma_{tT})\,\middle|\,\Gamma_{t_1 T},\,\gamma_{t_1 T},\,\Gamma_{t_2 T},\,\gamma_{t_2 T},\,\dots\,,\Gamma_{t_n T},\,\gamma_{t_n T}\right]\nonumber\\
&\quad\quad=\mathbb{E}\left[\phi(\Gamma_{tT},\gamma_{tT})\,\middle|\,\Gamma_{t_1 T},\,\gamma_{t_1 T}\right]\,.
\end{align}

\noindent We present the proof for $n=2$. Thus we need to show that

\begin{align}
&\mathbb{E}\left[\phi(\Gamma_{tT},\gamma_{tT})\,\middle|\,\Gamma_{t_1 T},\,\gamma_{t_1 T},\,\Gamma_{t_2 T},\,\gamma_{t_2 T}\right]\nonumber\\
&\quad\quad=\mathbb{E}\left[\phi(\Gamma_{tT},\gamma_{tT})\,\middle|\,\Gamma_{t_1 T},\,\gamma_{t_1 T}\right]\,.
\end{align}

\noindent As a consequence of \eqref{eq: Gamma identity} we have

\begin{align}
&\mathbb{E}\left[\phi(\Gamma_{tT},\gamma_{tT})\,\middle|\,\Gamma_{t_1 T},\,\gamma_{t_1 T},\,\Gamma_{t_2 T},\,\gamma_{t_2 T}\right]\nonumber\\
&\quad\quad=\mathbb{E}\left[\phi(\Gamma_{tT},\gamma_{tT})\,\middle|\,\Gamma_{t_1 T},\,\gamma_{t_1 T},\,\Gamma_{t_2 t_1},\,\gamma_{t_2 t_1}\right]\,.
\end{align}

\noindent Therefore, it suffices to show that

\begin{align}\label{eq: Markov n=2 VGInfo}
&\mathbb{E}\left[\phi(\Gamma_{tT},\gamma_{tT})\,\middle|\,\Gamma_{t_1 T},\,\gamma_{t_1 T},\,\Gamma_{t_2 t_1},\,\gamma_{t_2 t_1}\right]\nonumber\\
&\quad\quad=\mathbb{E}\left[\phi(\Gamma_{tT},\gamma_{tT})\,\middle|\,\Gamma_{t_1 T},\,\gamma_{t_1 T}\right]\,.
\end{align}

\noindent Let us write 

\begin{equation}
f_{\,\Gamma_{tT},\,\gamma_{tT},\,{\Gamma_{t_1 T}},\,\gamma_{t_1 T},\,\Gamma_{t_2 t_1},\,\gamma_{t_2 t_1}}(x,\,y,\,a,\,b,\,c,\,d)
\end{equation}
for the joint density of ${\Gamma_{tT},\,\gamma_{tT},\,{\Gamma_{t_1 T}},\,\gamma_{t_1 T},\,\Gamma_{t_2 t_1},\,\gamma_{t_2 t_1}}$. Then for the conditional density of $\Gamma_{tT}$ and 
$\gamma_{tT}$ given ${{\Gamma_{t_1 T}}=a,\,\gamma_{t_1 T}=b,\,\Gamma_{t_2 t_1}=c,\,\gamma_{t_2 t_1}=d}$ we have

\begin{equation}
g_{\,\Gamma_{tT},\,\gamma_{tT}}(x,\,y,\,a,\,b,\,c,\,d)=\frac{f_{\,\Gamma_{tT},\,\gamma_{tT},\,{\Gamma_{t_1 T}},\,\gamma_{t_1 T},\,\Gamma_{t_2 t_1},\,\gamma_{t_2 t_1}}(x,\,y,\,a,\,b,\,c,\,d)}{f_{\,{\Gamma_{t_1 T}},\,\gamma_{t_1 T},\,\Gamma_{t_2 t_1},\,\gamma_{t_2 t_1}}(a,\,b,\,c,\,d)}\,.
\end{equation}

\noindent Thus,

\begin{align}
&\mathbb{E}\left[\phi(\Gamma_{tT},\gamma_{tT})\,\middle|\,\Gamma_{t_1 T},\,\gamma_{t_1 T},\,\Gamma_{t_2 t_1},\,\gamma_{t_2 t_1}\right]\nonumber\\
&\quad \quad= \int_{\mathbb{R}}\int_{\mathbb{R}} \phi(x,y)\, g_{\,\Gamma_{tT},\,\gamma_{tT}}(x,\,y,\,\Gamma_{t_1 T},\,\gamma_{t_1 T},\,\Gamma_{t_2 t_1},\,\gamma_{t_2 t_1})\,\diff x\,\diff y \,.
\end{align}

\noindent Similarly,

\begin{align}
&\mathbb{E}\left[\phi(\Gamma_{tT},\gamma_{tT})\,\middle|\,\Gamma_{t_1 T},\,\gamma_{t_1 T}\right]\nonumber\\
&\quad \quad= \int_{\mathbb{R}}\int_{\mathbb{R}} \phi(x,y)\, g_{\,\Gamma_{tT},\,\gamma_{tT}}(x,\,y,\,\Gamma_{t_1 T},\,\gamma_{t_1 T}) \, \diff x\,\diff y \,,
\end{align}
where for the conditional density of  $\Gamma_{tT}$ and 
$\gamma_{tT}$ given ${\Gamma_{t_1 T}}=a,\,\gamma_{t_1 T}=b$ we have  

\begin{equation}
g_{\,\Gamma_{tT},\,\gamma_{tT}}(x,\,y,\,a,\,b)=\frac{f_{\,\Gamma_{tT},\,\gamma_{tT},\,{\Gamma_{t_1 T}},\,\gamma_{t_1 T}}(x,\,y,\,a,\,b)}{f_{\,{\Gamma_{t_1 T}},\,\gamma_{t_1 T}}(a,\,b)}\,.
\end{equation}

Note that the conditional probability densities that we introduce  in formulae such as those above are ``regular" conditional densities \cite[p.~91]{williams1991}. We shall show  that 

\begin{equation}
g_{\,\Gamma_{tT},\,\gamma_{tT}}(x,\,y,\,\Gamma_{t_1 T},\,\gamma_{t_1 T},\,\Gamma_{t_2 t_1},\,\gamma_{t_2 t_1})=g_{\,\Gamma_{tT},\,\gamma_{tT}}(x,\,y,\,\Gamma_{t_1 T},\,\gamma_{t_1 T})\,.
\end{equation}

\noindent Writing 

\begin{align}
&F_{\,\Gamma_{tT},\,\gamma_{tT},\,{\Gamma_{t_1 T}},\,\gamma_{t_1 T},\,\Gamma_{t_2 t_1},\,\gamma_{t_2 t_1}}(x,\,y,\,a,\,b,\,c,\,d)\nonumber\\
&\quad \quad= \mathbb{E}\left[\mathds{1}_{\left\{\Gamma_{tT}<x\right\}} \mathds{1}_{\left\{ \gamma_{tT}<y\right\}}\,\mathds{1}_{\left\{\Gamma_{t_1 T}<a \right\}}\mathds{1}_{\left\{ \gamma_{t_1 T}<b\right\}}\,\mathds{1}_{\left\{\Gamma_{t_2 t_1}<c\right\}}\mathds{1}_{\left\{ \gamma_{t_2 t_1}<d\right\}}\right]
\end{align}
for the joint distribution function, we see that 

\begin{align}
&F_{\,\Gamma_{tT},\,\gamma_{tT},\,{\Gamma_{t_1 T}},\,\gamma_{t_1 T},\,\Gamma_{t_2 t_1},\,\gamma_{t_2 t_1}}(x,\,y,\,a,\,b,\,c,\,d)\nonumber\\
& =\mathbb{E}\left[\mathds{1}_{\left\{\Gamma_{tT}<x\right\}} \mathds{1}_{\left\{ \gamma_{tT}<y\right\}}\,\mathds{1}_{\left\{\Gamma_{t_1 T}<a \right\}}\mathds{1}_{\left\{ \gamma_{t_1 T}<b\right\}}\,\mathds{1}_{\left\{\Gamma_{t_2 t_1}<c\right\}}\mathds{1}_{\left\{ \gamma_{t_2 t_1}<d\right\}}\right]\nonumber\\
&=\mathbb{E}\left[\mathbb{E}\left[\mathds{1}_{\left\{\Gamma_{tT}<x\right\}}\, \mathds{1}_{\left\{ \gamma_{tT}<y\right\}}\,\mathds{1}_{\left\{\Gamma_{t_1 T}<a \right\}}\,\mathds{1}_{\left\{ \gamma_{t_1 T}<b\right\}}\,\mathds{1}_{\left\{\Gamma_{t_2 t_1}<c\right\}}\mathds{1}_{\left\{ \gamma_{t_2 t_1}<d\right\}}\,\middle|\,\,\gamma_{t_2},\,\gamma_{t_1},\,\gamma_{t},\,\gamma_T\right]\right]\nonumber\\
&=\mathbb{E}\left[\mathds{1}_{\left\{ \gamma_{tT}<y\right\}}\,\mathds{1}_{\left\{ \gamma_{t_1 T}<b\right\}}\,\mathds{1}_{\left\{ \gamma_{t_2 t_1}<d\right\}}\,\mathbb{E}\left[\mathds{1}_{\left\{\Gamma_{tT}<x\right\}} \,\mathds{1}_{\left\{\Gamma_{t_1 T}<a \right\}}\,\mathds{1}_{\left\{\Gamma_{t_2 t_1}<c\right\}}\,\middle|\,\,\gamma_{t_2},\,\gamma_{t_1},\,\gamma_{t},\,\gamma_T\right]\right]\nonumber\\
&=\mathbb{E}\Bigg[\mathbb{E}\left[\mathds{1}_{\left\{\Gamma_{tT}<x\right\}}\, \mathds{1}_{\left\{ \gamma_{tT}<y\right\}}\,\mathds{1}_{\left\{\Gamma_{t_1 T}<a \right\}}\,\mathds{1}_{\left\{ \gamma_{t_1 T}<b\right\}}\,\middle|\,\,\gamma_{t_1},\,\gamma_{t},\,\gamma_T\right] \nonumber\\
&\hspace{4.0cm} \times N\left(\frac{c}{\sqrt{\left(1-\gamma_{t_2 t_1}\right)\left(\gamma_{t_2 t_1}\right)}}\right)\,\mathds{1}_{\left\{\gamma_{t_2 t_1}<d\right\}}\Bigg] ,
\end{align}
where the last step follows as a consequence of Lemma \ref{Indep2}.
Thus we have

\begin{align}
&F_{\,\Gamma_{tT},\,\gamma_{tT},\,{\Gamma_{t_1 T}},\,\gamma_{t_1 T},\,\Gamma_{t_2 t_1},\,\gamma_{t_2 t_1}}(x,\,y,\,a,\,b,\,c,\,d)\nonumber\\
&=\mathbb{E}\left[\mathds{1}_{\left\{\Gamma_{tT}<x\right\}}\, \mathds{1}_{\left\{ \gamma_{tT}<y\right\}}\,\mathds{1}_{\left\{\Gamma_{t_1 T}<a \right\}}\,\mathds{1}_{\left\{ \gamma_{t_1 T}<b\right\}}\,N\left(\frac{c}{\sqrt{\left(1-\gamma_{t_2 t_1}\right)\left(\gamma_{t_2 t_1}\right)}}\right)\,\mathds{1}_{\left\{\gamma_{t_2 t_1}<d\right\}}\right]\nonumber\\
&=\mathbb{E}\left[\mathds{1}_{\left\{\Gamma_{tT}<x\right\}}\, \mathds{1}_{\left\{ \gamma_{tT}<y\right\}}\,\mathds{1}_{\left\{\Gamma_{t_1 T}<a \right\}}\,\mathds{1}_{\left\{ \gamma_{t_1 T}<b\right\}}\right]\,\mathbb{E}\left[N\left(\frac{c}{\sqrt{\left(1-\gamma_{t_2 t_1}\right)\left(\gamma_{t_2 t_1}\right)}}\right)\,\mathds{1}_{\left\{\gamma_{t_2 t_1}<d\right\}}\right]\label{eq: indep last step theorem 2}\nonumber\\
&=F_{\,\Gamma_{tT},\,\gamma_{tT},\,{\Gamma_{t_1 T}},\,\gamma_{t_1 T}}(x,\,y,\,a,\,b)\times F_{\,\Gamma_{t_2 t_1},\,\gamma_{t_2 t_1}}(c,\,d)\,,
\end{align}
where  the next to last step follows by virtue of the fact that $\Gamma_{st}$ and ${\gamma_{uv}}$ are independent for $0\leq s\leq t\leq u \leq v$ and $t>0$. Similarly,

\begin{align}
&F_{\,{\Gamma_{t_1 T}},\,\gamma_{t_1 T},\,\Gamma_{t_2 t_1},\,\gamma_{t_2 t_1}}(a,\,b,\,c,\,d)\nonumber\\
&=\mathbb{E}\left[\mathds{1}_{\left\{\Gamma_{t_1 T}<a\right\}}\mathds{1}_{\left\{ \gamma_{t_1 T}<b\right\}}\,\mathds{1}_{\left\{\Gamma_{t_2 t_1}<c \right\}}\mathds{1}_{\left\{ \gamma_{t_2 t_1}<d\right\}}\right]\nonumber\\
&=\mathbb{E}\left[\mathbb{E}\left[\mathds{1}_{\left\{\Gamma_{t_1 T}<a \right\}} \mathds{1}_{\left\{ \gamma_{t_1 T}<b\right\}}\,\mathds{1}_{\left\{\Gamma_{t_2 t_1}<c \right\}}\mathds{1}_{\left\{ \gamma_{t_2 t_1}<d\right\}}\,\middle|\,\,\gamma_{t_2},\,\gamma_{t_1},\,\gamma_T\right]\right]\nonumber\\
&=\mathbb{E}\left[\mathds{1}_{\left\{ \gamma_{t_1 T}<b\right\}}\,\mathds{1}_{\left\{ \gamma_{t_2 t_1}<d\right\}}\,\mathbb{E}\left[\mathds{1}_{\left\{\Gamma_{t_1 T}<a \right\}}\,\mathds{1}_{\left\{\Gamma_{t_2 t_1}<c\right\}}\,\middle|\,\,\gamma_{t_2},\,\gamma_{t_1},\,\gamma_T\right]\right] ,
\end{align}
and hence

\begin{align}
&F_{\,{\Gamma_{t_1 T}},\,\gamma_{t_1 T},\,\Gamma_{t_2 t_1},\,\gamma_{t_2 t_1}}(a,\,b,\,c,\,d)\nonumber\\
&=\mathbb{E}\left[N\left(\frac{a}{\sqrt{\left(1-\gamma_{t_1 T}\right)\left(\gamma_{t_1 T}\right)}}\right)\,\mathds{1}_{\left\{\gamma_{t_1 T}<b\right\}}\,N\left(\frac{c}{\sqrt{\left(1-\gamma_{t_2 t_1}\right)\left(\gamma_{t_2 t_1}\right)}}\right)\,\mathds{1}_{\left\{\gamma_{t_2 t_1}<d\right\}}\right]\nonumber\\
&=\mathbb{E}\left[N\left(\frac{a}{\sqrt{\left(1-\gamma_{t_1 T}\right)\left(\gamma_{t_1 T}\right)}}\right)\,\mathds{1}_{\left\{\gamma_{t_1 T}<b\right\}}\right]\,\mathbb{E}\left[N\left(\frac{c}{\sqrt{\left(1-\gamma_{t_2 t_1}\right)\left(\gamma_{t_2 t_1}\right)}}\right)\,\mathds{1}_{\left\{\gamma_{t_2 t_1}<d\right\}}\right]\nonumber\\
&=F_{\,{\Gamma_{t_1 T}},\,\gamma_{t_1 T}}(a,\,b)\times F_{\,\Gamma_{t_2 t_1},\,\gamma_{t_2 t_1}}(c,\,d)\,.
\end{align}

\noindent Thus we deduce that

\begin{align}
&f_{\,\Gamma_{tT},\,\gamma_{tT},\,{\Gamma_{t_1 T}},\,\gamma_{t_1 T},\,\Gamma_{t_2 t_1},\,\gamma_{t_2 t_1}}(x,\,y,\,a,\,b,\,c,\,d)\\
&\quad\quad\quad\quad\quad\quad=f_{\,\Gamma_{tT},\,\gamma_{tT},\,{\Gamma_{t_1 T}},\,\gamma_{t_1 T}}(x,\,y,\,a,\,b)\times f_{\,\Gamma_{t_2 t_1},\,\gamma_{t_2 t_1}}(c,\,d)\,,
\end{align}
and 

\begin{align}
f_{\,{\Gamma_{t_1 T}},\,\gamma_{t_1 T},\,\Gamma_{t_2 t_1},\,\gamma_{t_2 t_1}}(a,\,b,\,c,\,d)=f_{\,{\Gamma_{t_1 T}},\,\gamma_{t_1 T}}(a,\,b)\times f_{\Gamma_{t_2 t_1},\,\gamma_{t_2 t_1}}(c,\,d)\,,
\end{align}
and the theorem follows. 
\end{proof}

\section{Variance Gamma Information}
\label{Variance Gamma  Information}

\noindent Fix $T>0$ and let $\{\Gamma_{tT}\}$ be a normalized variance gamma bridge, as defined by \eqref{eq: VG bridge}. Let $\{\gamma_{tT}\}$ be the associated gamma bridge defined by \eqref{eq: gamma bridge}. Let $X_T$ be a random variable and assume that $X_T$, $\{\gamma_t\}_{t\geq 0}$ and $\{W_t\}_{t\geq 0}$ are independent. We are led to the following: 
\begin{Definition}\label{def: information process}
By a variance-gamma information process carrying the market factor $X_T$ we mean a process 
$\{\xi_t\}_{t\geq 0}$ that takes the form

\begin{equation}
\xi_t=\Gamma_{tT}+\sigma\,{\gamma_{tT}}\,X_T\,
\label{information process}
\end{equation}
for $0 \leq t \leq T$ and $\xi_t = \sigma X_T$ for $t > T$, where $\sigma$ is a positive constant. 
\end{Definition}

The market filtration is assumed to be the standard augmented filtration generated jointly by $\{\xi_t\}$ and $\{\gamma_{tT}\}$. A calculation shows that if $0\leq s\leq t\leq T$ and $t>0$ then

\begin{equation}\label{eq: identity information}
\xi_{s}=\Gamma_{st}\left(\gamma_{tT}\right)^{\frac{1}{2}}+\xi_{t}\,{\gamma_{st}}\,.    
\end{equation}

\noindent We are thus led to the following result required for the valuation of assets.

%
\vspace{0.1cm}
\begin{Theorem}\label{MarkovInformationProcess2}
The processes $\{\xi_t\}_{0 \leq t \leq T}$ and $\{\gamma_{tT}\}_{0 \leq t \leq T}$ are jointly Markov.
\end{Theorem}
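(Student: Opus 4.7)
The plan is to mimic the proof of Theorem \ref{Markov_A_and_GammaBridge}, with $\xi_{\cdot}$ in place of $\Gamma_{\cdot T}$ and using the extra fact that $X_T$ is independent of the driving noise. I will handle the case $n=2$ in detail; the general case is analogous with heavier notation. Fix $0\leq t_2\leq t_1\leq t\leq T$. Applying identity \eqref{eq: identity information} with $(s,t)=(t_2,t_1)$ gives $\xi_{t_2}=\Gamma_{t_2 t_1}(\gamma_{t_1 T})^{1/2}+\xi_{t_1}\gamma_{t_2 t_1}$, while the multiplicative property $\gamma_{t_2 T}=\gamma_{t_2 t_1}\gamma_{t_1 T}$ follows directly from the definition of the gamma bridge. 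Inverting these relations yields
$$\sigma(\xi_{t_1},\gamma_{t_1 T},\xi_{t_2},\gamma_{t_2 T}) = \sigma(\xi_{t_1},\gamma_{t_1 T},\Gamma_{t_2 t_1},\gamma_{t_2 t_1}),$$
so it suffices to show that for every bounded measurable $\phi$,
$$\mathbb{E}[\phi(\xi_t,\gamma_{tT})\mid\xi_{t_1},\gamma_{t_1 T},\Gamma_{t_2 t_1},\gamma_{t_2 t_1}] = \mathbb{E}[\phi(\xi_t,\gamma_{tT})\mid\xi_{t_1},\gamma_{t_1 T}].$$

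The central step is to prove that the pair $(\Gamma_{t_2 t_1},\gamma_{t_2 t_1})$ is independent of the quadruple $(\xi_t,\gamma_{tT},\xi_{t_1},\gamma_{t_1 T})$. Since $\xi_u=\Gamma_{uT}+\sigma\gamma_{uT}X_T$ for $u\in\{t,t_1\}$, the quadruple is a measurable function of $(\Gamma_{tT},\gamma_{tT},\Gamma_{t_1 T},\gamma_{t_1 T},X_T)$. Theorem \ref{Markov_A_and_GammaBridge} already establishes, through its factorized joint distribution function, the joint independence of $(\Gamma_{tT},\gamma_{tT},\Gamma_{t_1 T},\gamma_{t_1 T})$ from $(\Gamma_{t_2 t_1},\gamma_{t_2 t_1})$. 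The standing hypothesis that $X_T$ is independent of $\{\gamma_u\}$ and $\{W_u\}$ allows $X_T$ to be appended as an additional independent factor; hence $(\Gamma_{tT},\gamma_{tT},\Gamma_{t_1 T},\gamma_{t_1 T},X_T)$ is independent of $(\Gamma_{t_2 t_1},\gamma_{t_2 t_1})$, and a fortiori so is $(\xi_t,\gamma_{tT},\xi_{t_1},\gamma_{t_1 T})$.

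With the independence in hand, the regular-conditional-density computation at the end of the proof of Theorem \ref{Markov_A_and_GammaBridge} carries over verbatim with $\xi$ replacing $\Gamma$: the joint density $f_{\xi_t,\gamma_{tT},\xi_{t_1},\gamma_{t_1 T},\Gamma_{t_2 t_1},\gamma_{t_2 t_1}}$ factors as $f_{\xi_t,\gamma_{tT},\xi_{t_1},\gamma_{t_1 T}}\cdot f_{\Gamma_{t_2 t_1},\gamma_{t_2 t_1}}$, and the conditioning marginal $f_{\xi_{t_1},\gamma_{t_1 T},\Gamma_{t_2 t_1},\gamma_{t_2 t_1}}$ factors analogously. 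The $(\Gamma_{t_2 t_1},\gamma_{t_2 t_1})$-factor cancels in the resulting conditional density, so the two conditional expectations in the previous display coincide, which is the required Markov identity.

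The main obstacle is Step 2, the \emph{joint} (not merely pairwise) independence between the old-time block $(\Gamma_{t_2 t_1},\gamma_{t_2 t_1})$ and the four-dimensional future block built from $(\Gamma_{tT},\Gamma_{t_1 T},\gamma_{tT},\gamma_{t_1 T})$. Fortunately this joint distribution function factorization is precisely what is derived in the proof of Theorem \ref{Markov_A_and_GammaBridge}; the introduction of the market factor $X_T$ is essentially cosmetic because $X_T$ is independent of the entire gamma-Brownian world. For general $n>2$, the same scheme applies: iterate identity \eqref{eq: identity information} to replace conditioning on $(\xi_{t_i},\gamma_{t_i T})_{i=1}^n$ by conditioning on $(\xi_{t_1},\gamma_{t_1 T})$ together with the mutually independent block $(\Gamma_{t_{i+1}t_i},\gamma_{t_{i+1}t_i})_{i=1}^{n-1}$, and repeat the density factorization argument.
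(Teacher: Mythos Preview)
Your argument is correct and follows essentially the same route as the paper's proof: both reduce the conditioning $\sigma$-algebra via identity \eqref{eq: identity information} and the multiplicativity of the gamma bridge, and then exploit the factorization established in the proof of Theorem~\ref{Markov_A_and_GammaBridge} together with the independence of $X_T$ from the Brownian--gamma world. If anything, you are more explicit than the paper about why appending $X_T$ preserves the block independence and about how the density factorization yields the conclusion; the paper simply invokes Lemma~\ref{Indep2} and Theorem~\ref{Markov_A_and_GammaBridge} at the final step without spelling this out. One minor remark: once you have shown that $(\Gamma_{t_2 t_1},\gamma_{t_2 t_1})$ is independent of $(\xi_t,\gamma_{tT},\xi_{t_1},\gamma_{t_1 T})$, the conditional-expectation identity follows directly from the standard fact that $\mathbb{E}[g(U)\mid V,W]=\mathbb{E}[g(U)\mid V]$ whenever $W$ is independent of $(U,V)$, so the density computation is not strictly needed and you avoid having to verify that joint densities exist for general laws of $X_T$.
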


\begin{proof}
It suffices to show that for any $n\in\mathbb{N}$ and $0<t_1<t_2<\dots<t_n$ we have 

\begin{equation}
\mathbb{E}\left[\phi(\xi_{t},\gamma_{tT})\,\middle|\,\xi_{t_1},\,\xi_{t_2},\,\dots\,,\xi_{t_n},\,\gamma_{t_1 T},\,\gamma_{t_2 T},\,\dots\,,\gamma_{t_n T}\right]
=\mathbb{E}\left[\phi(\xi_{t},\gamma_{tT})\,\middle|\,\xi_{t_1},\,\gamma_{t_1 T}\right] .
\end{equation}
We present the proof for $n=2$. Thus, we propose to show that 

\begin{equation}
\mathbb{E}\left[\phi(\xi_{t},\gamma_{tT})\,\middle|\,\xi_{t_1},\,\xi_{t_2},\,\gamma_{t_1 T},\,\gamma_{t_2 T}\right]
=\mathbb{E}\left[\phi(\xi_{t},\gamma_{tT})\,\middle|\,\xi_{t_1},\,\gamma_{t_1 T}\right] .
\end{equation}

\noindent By \eqref{eq: identity information}, we have

\begin{align}
&\mathbb{E}\left[\phi(\xi_{t},\gamma_{tT})\,\middle|\,\xi_{t_1},\,\xi_{t_2},\,\gamma_{t_1 T},\,\gamma_{t_2,T}\right]
\nonumber \\ 
&\quad\quad=\mathbb{E}\left[\phi(\xi_{t},\gamma_{tT})\,\middle|\,\xi_{t_1},\,\xi_{t_2},\,\gamma_{t_1 T},\,\gamma_{t_2 t_1}\right]
\nonumber\\
&\quad\quad=\mathbb{E}\left[\phi(\xi_{t},\gamma_{tT})\,\middle|\,\xi_{t_1},\,\Gamma_{t_2 t_1},\,\gamma_{t_1 T},\,\gamma_{t_2 t_1}\right]
\nonumber\\
&\quad\quad=\mathbb{E}\left[\phi(\Gamma_{tT}+{\gamma_{tT}}\,\sigma\,X_T,\gamma_{tT})\,\middle|\,\Gamma_{t_1 T}+{\gamma_{t_1 T}}\,\sigma\,X_T,\,\Gamma_{t_2 t_1},\,\gamma_{t_1 T},\,\gamma_{t_2 t_1}\right] .
\end{align}
Finally, we invoke Lemma \ref{Indep2}, and Theorem \ref{Markov_A_and_GammaBridge}  to conclude that

\begin{align}
&\mathbb{E}\left[\phi(\xi_{t},\gamma_{tT})|\xi_{t_1},\,\xi_{t_2},\,\gamma_{t_1 T},\,\gamma_{t_2,T}\right]
\nonumber\\
&\quad\quad=\mathbb{E}\left[\phi(\Gamma_{tT}+{\gamma_{tT}}\,\sigma\,X_T,\gamma_{tT})\,\middle|\,\Gamma_{t_1 T}+{\gamma_{t_1 T}}\,\sigma\,X_T,\,\gamma_{t_1 T}\right]
\nonumber\\
&\quad\quad=\mathbb{E}\left[\phi(\xi_{t},\gamma_{tT})|\xi_{t_1},\,\gamma_{t_1 T}\right] .
\end{align}
The generalization to $n>2$ is straightforward. 
\end{proof}
\section{Information Based Pricing}
\label{Asset Pricing}
\noindent Now we are in a position to consider the valuation of a financial asset in the setting just discussed. One recalls that $\mathbb{P}$ is understood to be the risk-neutral measure and that the interest rate is constant.  The payoff of the asset at time $T$ is taken to be an integrable random variable of the form $h(X_T)$ for some Borel function $h$, where $X_T$ is the information revealed at $T$. The filtration is generated jointly by the variance-gamma information process $\{\xi_t\}$ and the associated gamma bridge $\{\gamma_{tT}\}$. The value of the asset at time $t\in[0,T)$ is then given by the general expression  \eqref{asset pricing equation}, which on account of Theorem \ref{MarkovInformationProcess2} reduces in the present context to

\begin{align}
S_t=\re^{-r\,(T -t)}\,\mathbb{E}\left[h(X_T)\,|\,\xi_t,\,\gamma_{tT}\right], 
\end{align}
and our goal is to work out this expectation explicitly. 

Let us write $F_{X_T}$ for the \textit{a priori} distribution function of $X_T$. Thus $F_{X_T} :  x\in \mathbb{R} \mapsto F_{X_T}(x) \in [0,1]$ and we have 

\begin{equation}
F_{X_T}(x)=\mathbb{P}\left(X_T\leq x\right). 
\end{equation}
Occasionally, it will be typographically convenient to write $F_{X_T}^{(x)}$ in place of $F_{X_T}(x)$, and similarly for other distribution functions. To proceed, we require the following:

\begin{Lemma}
Let $X$ be a random variable with distribution $\{F_{X}(x)\}_{x\in \mathbb{R}}$ and let $Y$ be a continuous random variable with distribution $\{F_{Y}(y)\}_{y\in \mathbb{R}}$ and density $\{f_{Y}(y)\}_{y\in \mathbb{R}}$. Then for all $y\in\mathbb{R}$ for which $f_Y(y)\neq 0$ we have

\begin{equation}
F^{(x)}_{X|Y=y}=\frac{\int_{u\in(-\infty,x]}f_{Y|X=u}^{(y)}\,\diff F_X^{(u)}}{\int_{u\in(-\infty,\infty)}f_{Y|X=u}^{(y)}\,\diff F_X^{(u)}} ,
\label{conditional distribution of X}
\end{equation}
where $F^{(x)}_{X|Y=y}$ denotes the conditional distribution $\mathbb{P}\left(X\leq x\mid Y=y\right)$, and  where

\begin{equation}
  f_{Y|X=u}^{(y)}=\frac{\diff}{\diff y}\mathbb{P}\left(Y\leq y\mid X=u\right)  .
\end{equation}
\label{conditional density lemma}
\end{Lemma}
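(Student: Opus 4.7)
The plan is to derive the identity as a version of Bayes' rule in mixed form, treating the joint law of $(X,Y)$ as the mixture of the regular conditional laws $\mathbb{P}(Y\in\cdot\mid X=u)$ against the marginal measure $\diff F_X(u)$. The key inputs are (i) the hypothesis that for each $u$ the conditional distribution of $Y$ given $X=u$ admits a Lebesgue density $f_{Y\mid X=u}^{(y)}$, and (ii) Fubini--Tonelli, which allows the order of integration to be exchanged in the resulting double integral.

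First I would write, for any Borel sets $A,B\subset\mathbb{R}$,
\begin{equation}
\mathbb{P}\!\left(X\in A,\,Y\in B\right)=\int_{A}\mathbb{P}\!\left(Y\in B\mid X=u\right)\diff F_X(u)=\int_{A}\int_{B}f_{Y\mid X=u}^{(y)}\,\diff y\,\diff F_X(u),
\end{equation}
which is just the disintegration of the joint law together with the density hypothesis. Then I would apply Fubini--Tonelli (the integrand is non-negative) to rewrite the right-hand side as $\int_{B}\bigl[\int_{A}f_{Y\mid X=u}^{(y)}\,\diff F_X(u)\bigr]\diff y$. Taking $A=\mathbb{R}$ and comparing with $\mathbb{P}(Y\in B)=\int_{B}f_Y(y)\,\diff y$ yields the mixture representation for the marginal density,
\begin{equation}
f_Y(y)=\int_{\mathbb{R}}f_{Y\mid X=u}^{(y)}\,\diff F_X(u),
\end{equation}
valid for Lebesgue-almost every $y$. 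Specialising $A=(-\infty,x]$ gives
\begin{equation}
\mathbb{P}\!\left(X\leq x,\,Y\in B\right)=\int_{B}\left[\int_{(-\infty,x]}f_{Y\mid X=u}^{(y)}\,\diff F_X(u)\right]\diff y,
\end{equation}
so that the ratio appearing in \eqref{conditional distribution of X}, multiplied by $f_Y(y)\,\diff y$ and integrated over $B$, recovers $\mathbb{P}(X\leq x,\,Y\in B)$. By the defining property of the regular conditional distribution (applied to a $y$ with $f_Y(y)\neq 0$), this identifies the ratio with $F_{X\mid Y=y}^{(x)}$ and proves the lemma.

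The only genuine obstacle is measure-theoretic care: $X$ is allowed to be arbitrary (its distribution need not be absolutely continuous), so one cannot write a joint density $f_{X,Y}(x,y)$, and the manipulations must be carried out with the mixed measure $f_{Y\mid X=u}^{(y)}\diff y\,\diff F_X(u)$. Verifying that the candidate ratio really is a version of the regular conditional distribution requires checking the defining equality $\mathbb{P}(X\in A,\,Y\in B)=\int_B F_{X\mid Y=y}^{(A)}f_Y(y)\diff y$ for all Borel $A,B$, which follows directly from the Fubini step. The existence of a regular conditional version is standard for real-valued random variables \cite[p.~91]{williams1991}, and the exceptional null set of $y$ on which the identity might fail is precisely the set $\{f_Y(y)=0\}$ excluded in the statement.
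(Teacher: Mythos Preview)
Your argument is correct and gives a clean derivation of the mixed Bayes formula. It is, however, organised somewhat differently from the paper's proof. The paper proceeds by first writing the symmetric identity
\[
\int_{(-\infty,x]}F^{(y)}_{Y|X=u}\,\diff F_X^{(u)}=\int_{(-\infty,y]}F^{(x)}_{X|Y=v}\,\diff F_Y^{(v)},
\]
then invoking the Radon--Nikodym theorem to show that the conditional measure $F_{X|Y=y}(\diff x)$ is absolutely continuous with respect to $F_X(\diff x)$, with some density $g_y(x)$; substituting this back into the symmetric identity and using the continuity of $Y$ then identifies $g_y(x)=f_{Y|X=x}^{(y)}/f_Y^{(y)}$. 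Your route bypasses the Radon--Nikodym step entirely: you start from the disintegration $\mathbb{P}(X\in A,\,Y\in B)=\int_A\int_B f_{Y|X=u}^{(y)}\,\diff y\,\diff F_X(u)$, apply Fubini, and verify directly that the proposed ratio satisfies the defining property of the regular conditional distribution. The paper's approach has the virtue of \emph{deriving} the existence of the conditional density $f_{Y|X=x}^{(y)}$ from the hypotheses, rather than taking it as an input; your approach is shorter and more transparent once that existence is granted (as the statement of the lemma effectively does by writing $f_{Y|X=u}^{(y)}=\tfrac{\diff}{\diff y}\mathbb{P}(Y\leq y\mid X=u)$).
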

\begin{proof}
For \textit{any} two random variables  $X$ and $Y$ it holds that

\begin{align}
\mathbb{P}\left(X\leq x,\,Y\leq y\right)&=\mathbb{E}\left[\mathds{1}_{\{X\leq x\}}\,\mathds{1}_{\{Y\leq y\}}\right]
\nonumber\\
&=\mathbb{E}\left[\mathbb{E}\left[\mathds{1}_{\{X\leq x\}}\big|Y\right]\mathds{1}_{\{Y\leq y\}}\,\right]
\nonumber\\
&=\mathbb{E}\left[F^{(x)}_{X|Y}\,\mathds{1}_{\{Y\leq y\}}\,\right] .
\end{align}
Here we have used the fact that for each $x \in \mathbb R$ there exists a Borel measurable function 
$P_x: y \in \mathbb R \mapsto P_x(y) \in [0, 1]$ such that 
$\mathbb{E}\left[\mathds{1}_{\{X\leq x\}}\big|Y\right]= P_x(Y)$. Then for $y\in \mathbb R$ we define 

\begin{align}
F^{(x)}_{X|Y=y} = P_x(y)\,.
\end{align}
Hence

\begin{align}
\mathbb{P}\left(X\leq x,\,Y\leq y\right) 
=\int_{v\in(-\infty,y]}F^{(x)}_{X|Y=v}\,\diff F_Y^{(v)}\,.
\end{align}
By symmetry, we have

\begin{equation}
\mathbb{P}\left(X\leq x,\,Y\leq y\right)=\int_{u\in(-\infty,x]}F^{(y)}_{Y|X=u}\,\diff F_X^{(u)}\, ,
\end{equation}
from which it follows that we have the relation

\begin{equation}
\int_{u\in(-\infty,x]}F^{(y)}_{Y|X=u}\,\diff F_X^{(u)} =\int_{v\in(-\infty,y]}F^{(x)}_{X|Y=v}\,\diff F_Y^{(v)}\,.
\label{Bayes identity}
\end{equation}

Moving ahead, let us consider the measure $F_{X|Y=y}(\rd x)$ on $(\mathbb R, \mathcal B)$ defined for each $y \in \mathbb R$ by setting 

\begin{equation}
F_{X|Y=y}(A) = \mathbb{E}\left[\mathds{1}_{\{X\in A\}}\big|Y=y \right]
\end{equation} 
for any $A \in  \mathcal B$. Then $F_{X|Y=y}(\rd x)$ is absolutely continuous with respect to $F_X(\rd x)$. Indeed, suppose that $F_X(B) = 0$
for some $B \in  \mathcal B$. Now, 
$F_{X|Y=y}(B) = \mathbb{E}\left[\mathds{1}_{\{X\in B\}}\big|Y=y\right]$. 
But if
$\mathbb{E}\left[\mathds{1}_{\{X\in B\}}\right] = 0$, 
then 
$ \mathbb{E} \left[ \mathbb{E}\left[\mathds{1}_{\{X\in B\}}\big|Y\right] \right] = 0$, 
and hence 
$ \mathbb{E}\left[\mathds{1}_{\{X\in B\}}\big|Y\right] = 0$,
and therefore 
$ \mathbb{E}\left[\mathds{1}_{\{X\in B\}}\big|Y=y\right] = 0$. 
Thus $F_{X|Y=y}(B)$ vanishes for any
$B \in  \mathcal B$ for which $F_X(B)$ vanishes. It follows by the Radon-Nikodym theorem that 
for each $y \in \mathbb R$ there exists a density $\{g_y(x)\}_{x\in \mathbb{R}}$ such that

\begin{align}
F^{(x)}_{X|Y=y} = \int_{u\in(-\infty,x]} g_y(u)\,\diff F_X^{(u)}\, .
\label{Radon Nikodym}
\end{align}
Note that $\{g_y(x)\}$ is determined uniquely apart from its values on $F_X$-null sets.  Inserting \eqref{Radon Nikodym} into \eqref{Bayes identity} we obtain

\begin{equation}
\int_{u\in(-\infty,x]}F^{(y)}_{Y|X=u}\,\diff F_X^{(u)} =\int_{v\in(-\infty,y]}\int_{u\in(-\infty,x]} g_v(u)\,\diff F_X^{(u)}\,\diff F_Y^{(v)}\,,
\end{equation}
and thus by Fubini's theorem we have

\begin{equation}
\int_{u\in(-\infty,x]}F^{(y)}_{Y|X=u}\,\diff F_X^{(u)} = \int_{u\in(-\infty,x]} \int_{v\in(-\infty,y]} g_v(u) \, \diff F_Y^{(v)}\,\diff F_X^{(u)}\,.
\end{equation}
It follows then that $\{F^{(y)}_{Y|X=x}\}_{x\in \mathbb{R}}$ is determined uniquely apart from its values on $F_X$-null sets, and we have

\begin{equation}
F^{(y)}_{Y|X=x} =  \int_{v\in(-\infty,y]} g_v(x) \, \diff F_Y^{(v)}.
\label{conditional distribution of Y}
\end{equation}

This relation holds quite generally and is symmetrical between $X$ and $Y$. Indeed, we have not so far assumed that $Y$ is a continuous random variable.  If $Y$ is, in fact, a continuous random variable, then its distribution function is absolutely continuous and admits a density $\{f_Y^{(y)}\}_{y\in\mathbb{R}}$.
In that case,   \eqref{conditional distribution of Y} can be written in the form

\begin{equation}
F^{(y)}_{Y|X=x} =  \int_{v\in(-\infty,y]} g_v(x) \, f_Y^{(v)} \, \diff v \, ,
\end{equation}
from which it follows that for each value of $x$ the conditional distribution function
$\{F^{(y)}_{Y|X=x} \}_{y\in\mathbb{R}}$ is absolutely continuous and admits a density 
$\{f^{(y)}_{Y|X=x} \}_{y\in\mathbb{R}}$ such that

\begin{equation}
f^{(y)}_{Y|X=x} =  g_y(x) \, f_Y^{(y)} \, .
\label{equation for g}
\end{equation}
The desired result \eqref{conditional distribution of X} then follows from  \eqref{Radon Nikodym} and \eqref{equation for g} if we observe that 

\begin{align}
    f_Y^{(y)}=\int_{u\in(-\infty,\infty)} f_{Y|X=u}^{(y)}\,\diff F_X^{(u)}\,,
\end{align}
and that concludes the proof.
\end{proof}

Armed with Lemma \ref{conditional density lemma}, we are in a position to work out the conditional expectation that leads to the asset price, and we obtain the following: 

\vspace{0.2cm}
\begin{Theorem} The variance-gamma information-based price of a financial asset with payoff $h(X_T)$ at time $T$ is given for $t<T$ by

\begin{align}\label{eq: value v_t formula}
S_t=\re^{-r\,(T-t)}\,\int_{x\in \mathbb{R}}h(x)\,\frac{\re^{\left(\sigma\,\xi_t\,x-\frac{1}{2}\,\sigma^2\,x^2\,\gamma_{tT}\right)\,\left(1-\gamma_{tT}\right)^{-1}}}{\int_{y\in \mathbb{R}}\re^{\left(\sigma\,\xi_t\,y-\frac{1}{2}\,\sigma^2\,y^2\,\gamma_{tT}\right)\,\left(1-\gamma_{tT}\right)^{-1}}\diff F^{(y)}_{X_T}} \diff F^{(x)}_{X_T}\,.
\end{align}
\label{thm: value v_t formula}
\end{Theorem}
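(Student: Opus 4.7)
The plan is to combine Theorem \ref{MarkovInformationProcess2} with the Bayesian inversion provided by Lemma \ref{conditional density lemma}. By the joint Markov property of $(\xi_t,\gamma_{tT})$, the valuation formula \eqref{asset pricing equation} reduces to
\[
S_t = \re^{-r(T-t)}\,\mathbb{E}\!\left[h(X_T) \,\middle|\, \xi_t,\,\gamma_{tT}\right],
\]
so it suffices to identify the conditional distribution of $X_T$ given $(\xi_t,\gamma_{tT})$ explicitly.

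The first step is to pin down the conditional law of $\xi_t$ given $X_T$ and $\gamma_{tT}$. As already noted in Section \ref{Normalized Variance-Gamma Bridge}, $\Gamma_{tT}$ is conditionally Gaussian given $(\gamma_t,\gamma_T)$ with mean zero and variance $\gamma_{tT}(1-\gamma_{tT})$; because this conditional law depends on $(\gamma_t,\gamma_T)$ only through the ratio $\gamma_{tT}$, the tower property yields that conditional on $\gamma_{tT}=z$ alone, $\Gamma_{tT}\sim N(0,z(1-z))$. Since $X_T$ is independent of $\{W_t\}$ and $\{\gamma_t\}$, Definition \ref{def: information process} then gives, for $0<z<1$,
\[
f_{\xi_t \mid X_T=x,\,\gamma_{tT}=z}(\xi) \,=\, \frac{1}{\sqrt{2\pi z(1-z)}}\,\exp\!\left(-\frac{(\xi - \sigma z x)^2}{2\,z(1-z)}\right).
\]

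The second step is to apply Lemma \ref{conditional density lemma} with $X=X_T$ and $Y=\xi_t$, working conditionally on $\gamma_{tT}=z$. Because $\gamma_{tT}$ is independent of $X_T$, the prior $F_{X_T}$ is unaltered by the auxiliary conditioning, and the lemma produces
\[
dF_{X_T \mid \xi_t=\xi,\,\gamma_{tT}=z}(x) \,=\, \frac{f_{\xi_t\mid X_T=x,\,\gamma_{tT}=z}(\xi)\,dF_{X_T}(x)}{\int_{\mathbb{R}} f_{\xi_t\mid X_T=y,\,\gamma_{tT}=z}(\xi)\,dF_{X_T}(y)}\,.
\]
The Gaussian normalising constant $1/\sqrt{2\pi z(1-z)}$ and the $x$-independent factor $\exp(-\xi^2/(2z(1-z)))$ cancel between numerator and denominator, leaving the exponent $(\sigma x \xi - \tfrac12\sigma^2 x^2 z)/(1-z)$. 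Integrating $h(x)$ against this conditional measure, substituting $\xi = \xi_t$ and $z = \gamma_{tT}$, and multiplying by $\re^{-r(T-t)}$ delivers \eqref{eq: value v_t formula}.

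The principal obstacle is really a matter of careful bookkeeping: one must first establish that the conditional Gaussianity of $\Gamma_{tT}$ survives the passage from conditioning on $(\gamma_t,\gamma_T)$ to conditioning on $\gamma_{tT}$ alone, and then one must invoke Lemma \ref{conditional density lemma} with $\gamma_{tT}$ frozen as a parameter, justifying this by the independence of $\gamma_{tT}$ and $X_T$ together with the regularity of the conditional densities. Once these two points are established, everything that remains is direct algebraic simplification.
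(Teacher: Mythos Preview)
Your proposal is correct and follows essentially the same approach as the paper: reduce via the joint Markov property (Theorem~\ref{MarkovInformationProcess2}), identify the conditional Gaussian law of $\xi_t$ given $X_T$ and the bridge value, and then invert via Lemma~\ref{conditional density lemma}. The only minor technical difference is that the paper first conditions additionally on $\gamma_T$ (so that the Gaussianity of $\Gamma_{tT}$ given $(\gamma_t,\gamma_T)$ is immediate), observes the resulting expression does not depend on $\gamma_T$, and then takes the outer expectation, whereas you bypass this by arguing directly via the tower property that the conditional law of $\Gamma_{tT}$ given $\gamma_{tT}$ alone is already $N(0,\gamma_{tT}(1-\gamma_{tT}))$; both routes are valid and amount to the same computation.
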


\begin{proof}
To calculate the conditional expectation of $h(X_T)$, we observe that

\begin{align}\label{double expectation asset pricing}
\mathbb{E}\left[h(X_T)\,|\,\xi_t,\,\gamma_{tT}\right]&=\mathbb{E}\bigg[\mathbb{E}\left[h(X_T)\,|\,\xi_t,\,\gamma_{tT},\,\gamma_T\right]\,\bigg|\,\xi_t,\,\gamma_{tT}\bigg] ,
\end{align}
by the tower property, where the inner expectation takes the form

\begin{align}
\mathbb{E}\left[h(X_T)\,|\,\xi_t=\xi,\,\gamma_{tT}=b,\,\gamma_T=g\right]&=\int_{x\in \mathbb{R}}h(x)\,\diff F^{(x)}_{X_T |\xi_t=\xi,\,\gamma_{tT}=b,\,\gamma_T=g}\,.
\end{align}
Here by Lemma \ref{conditional density lemma} the conditional distribution function is 

\begin{align}
F^{(x)}_{X_T|\xi_t=\xi,\,\gamma_{tT}=b,\,\gamma_T=g}
&=\frac{\int_{u\in(-\infty,\,x\,]} f^{(\xi)}_{\xi_t\,|\,X_T=u,\,\gamma_{tT}=b\,,\gamma_T=g}\,\diff F^{(u)}_{X_T\,|\,\gamma_{tT}=b\,,\gamma_T=g}}
{\int_{u\in \mathbb{R}}f^{(\xi)}_{\xi_t\,|\,X_T=u,\,\gamma_{tT}=b\,,\gamma_T=g}\,\diff F^{(u)}_{X_T\,|\,\gamma_{tT}=b\,,\gamma_T=g}}\nonumber\\
&=\frac{\int_{u\in(-\infty,\,x\,]} f^{(\xi)}_{\xi_t\,|\,X_T=u,\,\gamma_{tT}=b\,,\gamma_T=g}\,\diff F^{(u)}_{X_T}}
{\int_{u\in \mathbb{R}}f^{(\xi)}_{\xi_t\,|\,X_T=u,\,\gamma_{tT}=b\,,\gamma_T=g}\,\diff F^{(u)}_{X_T}}\nonumber\\
&=\frac{\int_{u\in(-\infty,\,x\,]} \re^{\left(\sigma\,\xi\,u-\frac{1}{2}\,\sigma^2\,u^2\,b\right)\,\left(1-b\right)^{-1}} \diff F^{(u)}_{X_T} }{\int_{\mathbb{R}} \re^{\left(\sigma\,\xi\,u-\frac{1}{2}\,\sigma^2\,u^2\,b\right)\,\left(1-b\right)^{-1}} \diff F^{(u)}_{X_T}}\,.
\end{align}
\noindent Therefore, the inner expectation in equation \eqref{double expectation asset pricing} is given by

\begin{align}
\mathbb{E}\left[h(X_T)\,|\,\xi_t,\,\gamma_{tT},\,\gamma_T\right]&=\int_{x\in \mathbb{R}}h(x)\,\frac{\re^{\left(\sigma\,\xi_t\,x-\frac{1}{2}\,\sigma^2\,x^2\,\gamma_{tT}\right)\,\left(1-\gamma_{tT}\right)^{-1}}}{\int_{y\in \mathbb{R}}\re^{\left(\sigma\,\xi_t\,y-\frac{1}{2}\,\sigma^2\,y^2\,\gamma_{tT}\right)\,\left(1-\gamma_{tT}\right)^{-1}}\diff F^{(y)}_{X_T}} \diff F^{(x)}_{X_T}\,.
\label{conditional provisional calculation}
\end{align}
But the right hand side of \eqref{conditional provisional calculation} depends only on $\xi_t$ and $\gamma_{tT}$. It follows immediately that 

\begin{align}
\mathbb{E}\left[h(X_T)\,|\,\xi_t,\,\gamma_{tT}\right]=\int_{x\in \mathbb{R}}h(x)\,\frac{\re^{\left(\sigma\,\xi_t\,x-\frac{1}{2}\,\sigma^2\,x^2\,\gamma_{tT}\right)\,\left(1-\gamma_{tT}\right)^{-1}}}{\int_{y\in \mathbb{R}}\re^{\left(\sigma\,\xi_t\,y-\frac{1}{2}\,\sigma^2\,y^2\,\gamma_{tT}\right)\,\left(1-\gamma_{tT}\right)^{-1}}\diff F^{(y)}_{X_T}} \diff F^{(x)}_{X_T}\,,
\end{align}
which translates  into equation \eqref{eq: value v_t formula}, and that concludes the proof.
\end{proof}

\section{Examples}
\label{Examples}

\noindent Going forward, we present some examples of variance-gamma information pricing for specific choices of (a) the payoff function $h:\mathbb{R}\to\mathbb{R}^+$ and (b) the distribution of the market factor $X_T$. 
In the figures, we display sample paths for the information processes and the corresponding prices. These paths are generated as follows. First, we simulate outcomes for the market factor $X_T$. Second, we simulate paths for the gamma process $\{\gamma_t\}_{t\geq 0}$ over the interval $[0,T]$ and an independent Brownian motion $\{W_t\}_{t\geq 0}$. Third, we evaluate the variance gamma process $\{W_{\gamma_t}\}_{t\geq 0}$ over the interval $[0,T]$ by subordinating the Brownian motion with the gamma process, and we evaluate the resulting gamma bridge $\{\gamma_{tT}\}_{0\leq t\leq T}$. Fourth, we use these ingredients to construct sample paths of the information processes, where these processes are given as in Definition \ref{def: information process}. Finally, we evaluate the pricing formula in equation \eqref{eq: value v_t formula} for each of the simulated paths and for each time step.

\vspace{0.2cm}

\noindent \textbf{Example 1.}  We begin with the simplest case, that of a unit-principal credit-risky bond without recovery.  We set $h(x) = x$, with $ \mathbb P(X_T = 0) = p_0$ and $\mathbb P(X_T = 1) = p_1$, where $p_0+p_1=1$. Thus,  we have 

\begin{equation}
    F_{X_T}(x) = p_0  \delta_0(x) + p_1  \delta_1(x)\,,
\end{equation}
where

\begin{equation}
    \delta_a(x) = \int_{y\in (-\infty, x]} \delta_a({\rm d}y)\,,
\end{equation}
and $\delta_a({\rm d}x)$ denotes the Dirac measure concentrated at the point $a$, and we are led to the following:

\vspace{0.2cm}
\begin{Proposition}
The variance-gamma information-based price of a unit-principal credit-risky discount bond with no recovery is given by

\begin{align}\label{bond value}
S_t=\re^{-r\,(T-t)}\,\frac{p_1 \, \re^{\left(\sigma\,\xi_t\,-\frac{1}{2}\,\sigma^2\,\gamma_{tT}\right)\,\left(1-\gamma_{tT}\right)^{-1}}}{p_0 + p_1 \, \re^{\left(\sigma\,\xi_t\,-\frac{1}{2}\,\sigma^2\,\gamma_{tT}\right)\,\left(1-\gamma_{tT}\right)^{-1}} } \,.
\end{align}

\end{Proposition}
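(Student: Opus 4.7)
The plan is to apply Theorem \ref{thm: value v_t formula} directly, treating the integrals in \eqref{eq: value v_t formula} as Lebesgue--Stieltjes integrals against the atomic measure $F_{X_T}(\rd x) = p_0\,\delta_0(\rd x) + p_1\,\delta_1(\rd x)$. Since the measure is supported on $\{0,1\}$, each integral collapses to a sum of at most two terms, one contributed by the atom at $0$ and one by the atom at $1$.

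First I would evaluate the normalising denominator
\begin{equation}
\int_{y\in \mathbb{R}}\re^{\left(\sigma\,\xi_t\,y-\frac{1}{2}\,\sigma^2\,y^2\,\gamma_{tT}\right)\,\left(1-\gamma_{tT}\right)^{-1}}\diff F^{(y)}_{X_T}
= p_0 + p_1\,\re^{\left(\sigma\,\xi_t\,-\frac{1}{2}\,\sigma^2\,\gamma_{tT}\right)\,\left(1-\gamma_{tT}\right)^{-1}},
\end{equation}
using that the integrand equals $1$ at $y=0$ and equals $\re^{\left(\sigma\,\xi_t\,-\frac{1}{2}\,\sigma^2\,\gamma_{tT}\right)\,\left(1-\gamma_{tT}\right)^{-1}}$ at $y=1$. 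Next I would evaluate the numerator integral after multiplying the integrand by $h(x)=x$; the contribution from the atom at $x=0$ vanishes, leaving only the $x=1$ contribution, namely $p_1\,\re^{\left(\sigma\,\xi_t\,-\frac{1}{2}\,\sigma^2\,\gamma_{tT}\right)\,\left(1-\gamma_{tT}\right)^{-1}}$.

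Finally I would combine the two pieces and multiply by the discount factor $\re^{-r(T-t)}$ to obtain \eqref{bond value}. There is no substantive obstacle here: the result is a direct specialisation of Theorem \ref{thm: value v_t formula} to a Bernoulli market factor, and the only things to be careful about are (i) handling the Dirac atoms correctly as measures rather than densities and (ii) recording that at $x=0$ the exponent in the integrand vanishes, so the atom at $0$ contributes $p_0$ to the denominator and nothing to the numerator.
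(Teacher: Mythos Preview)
Your proposal is correct and follows precisely the approach of the paper: the proposition is obtained by direct specialisation of Theorem~\ref{thm: value v_t formula} to the two-point distribution $F_{X_T}(\rd x)=p_0\,\delta_0(\rd x)+p_1\,\delta_1(\rd x)$ with $h(x)=x$, and the paper gives no further argument beyond setting up this data and stating the result.
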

\vspace{0.2cm}
\begin{figure}[H]
    \centering
    \includegraphics[scale=0.48]{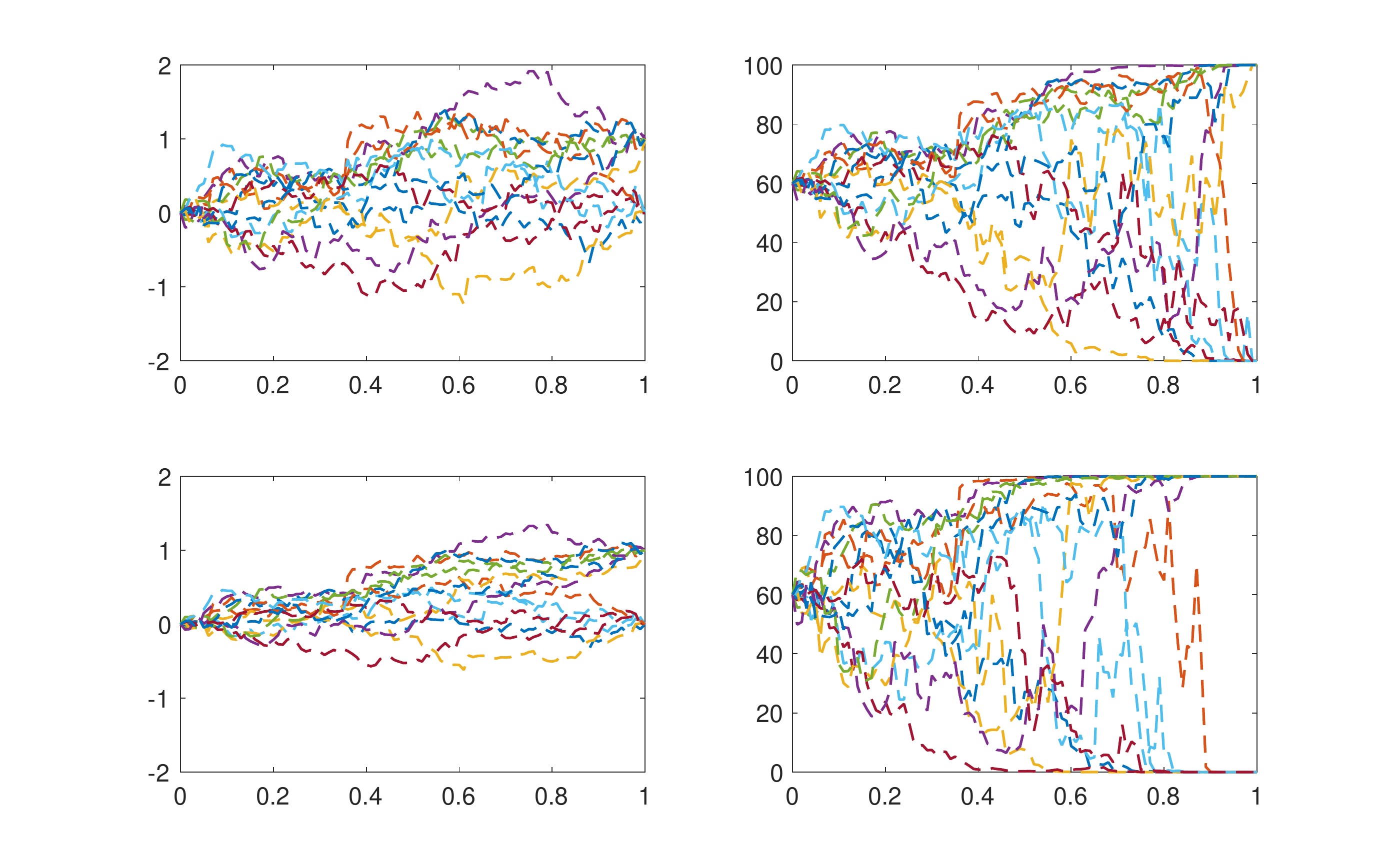}
    \caption{Credit-risky bonds with no recovery. The panels on the left show simulations of trajectories of the variance gamma information process, and the panels on the right show simulations of the corresponding price trajectories. Prices are quoted as percentages of the principal, and the interest rate is taken to be zero. From top to bottom, we show trajectories having $\sigma=1,\,2,$ respectively. We take $p_0=0.4$ for the probability of default and $p_1=0.6$ for the probability of no default. The value of $m$ is 100 in all cases.  Fifteen simulated trajectories are shown in each panel.}
    \label{fig:db 12}
\end{figure}
\noindent Now let $\omega \in \Omega$ denote the outcome of chance. By use of equation \eqref{information process} one can check rather directly that if $X_T(\omega)$ = 1, then $\lim_{t \to T} S_t = 1$, whereas if $X_T(\omega)$ = 0, then $\lim_{t \to T} S_t = 0$. More explicitly, we find that

\begin{equation}
S_t \bigg|_{X_T(w)=0}=\re^{-r(T-t)}\,\frac{p_1\,\exp\left[\sigma\,\left(\gamma^{-1/2}_T\left(W_{\gamma_t}-\gamma_{tT}\,W_{\gamma_T}\right)-\frac{1}{2}\,\sigma\,\gamma_{tT}\right)\,(1-\gamma_{tT})^{-1}\right]}{p_0+p_1\,\exp\left[\sigma\,\left(\gamma^{-1/2}_T\left(W_{\gamma_t}-\gamma_{tT}\,W_{\gamma_T}\right)-\frac{1}{2}\,\sigma\,\gamma_{tT}\right)\,(1-\gamma_{tT})^{-1}\right]}\,,
\end{equation}
whereas

\begin{equation}
S_t \bigg|_{X_T(w)=1}=\re^{-r(T-t)}\,\frac{p_1\,\exp\left[\sigma\,\left(\gamma^{-1/2}_T\left(W_{\gamma_t}-\gamma_{tT}\,W_{\gamma_T}\right)+\frac{1}{2}\,\sigma\,\gamma_{tT}\right)\,(1-\gamma_{tT})^{-1}\right]}{p_0+p_1\,\exp\left[\sigma\,\left(\gamma^{-1/2}_T\left(W_{\gamma_t}-\gamma_{tT}\,W_{\gamma_T}\right)+\frac{1}{2}\,\sigma\,\gamma_{tT}\right)\,(1-\gamma_{tT})^{-1}\right]}\,,
\end{equation}
and the claimed limiting behaviour of the asset price follows by inspection.
In Figures \ref{fig:db 12} and \ref{fig:db 34} we plot sample paths for the information processes and price processes of credit risky bonds for various values of the information flow-rate parameter. One observes that for $\sigma=1$ the information processes diverge, thus distinguishing those bonds that default from those that do not, only towards the end of the relevant time frame; whereas for higher values of $\sigma$ the divergence occurs progressively earlier, and one sees a corresponding effect in the price processes. Thus, when the information flow rate is higher, the final outcome of the bond payment is anticipated earlier, and with greater certainty. Similar conclusions hold for the interpretation of Figures \ref{fig:ln 12} and \ref{fig:ln 34}.

\begin{figure}[H]
    \centering
    \includegraphics[scale=0.48]{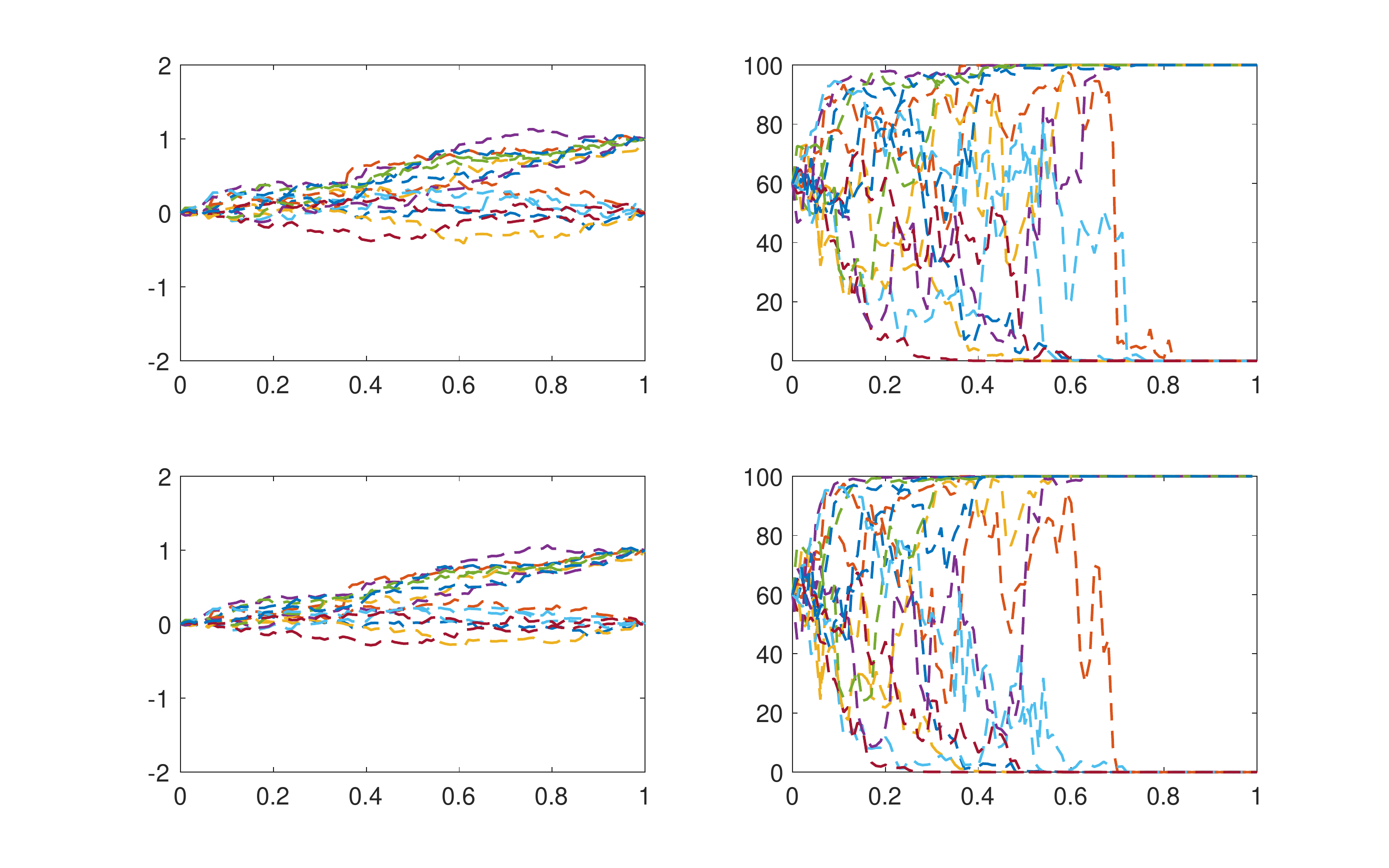}
    \caption{Credit-risky bonds with no recovery. From top to bottom we show trajectories having $\sigma=3,\,4,$ respectively. The other parameters are the same as in Figure \ref{fig:db 12}.}
    \label{fig:db 34}
\end{figure}
\vspace{0.2cm}
\noindent \textbf{Example 2.} As a somewhat more sophisticated version of the previous example, we consider the case of a defaultable bond with random recovery. We shall work out the case where $h(x)=x$ and the market factor $X_T$ takes the value $c$ with probability $p_1$ and $X_T$ is uniformly distributed over the interval $[a,b]$ with probability $p_0$, where $0\leq a<b\leq c$. Thus, for the probability measure of $X_T$ we have

\begin{equation}
F_{X_T}(\diff x)=p_0\,\mathds{1}_{\{a\leq x< b\}} \, \diff x+p_1\,\delta_c(\diff x)\,,
\end{equation}
and for the distribution function we obtain

\begin{equation}
F_{X_T}(x)=p_0\,x\,\mathds{1}_{\{a\leq x< b\}}+\mathds{1}_{\{x\geq c\}}\,.
\end{equation}
\noindent The bond price at time $t$ is then obtained by working out the expression 

\begin{equation}\label{def bond with recovery 1}
S_t=\re^{-r\,(T-t)}\,\frac{p_0 \,\int_a^b x\,\re^{\left(\sigma\,\xi_t\,x-\frac{1}{2}\,\sigma^2\,x^2\,\gamma_{tT}\right)\,\left(1-\gamma_{tT}\right)^{-1}}\,\diff x +p_1 \, c \, \re^{\left(\sigma\,\xi_t\,-\frac{1}{2}\,\sigma^2\,\gamma_{tT}\right)\,\left(1-\gamma_{tT}\right)^{-1}}}{p_0 \,\int_a^b \re^{\left(\sigma\,\xi_t\,x-\frac{1}{2}\,\sigma^2\,x^2\,\gamma_{tT}\right)\,\left(1-\gamma_{tT}\right)^{-1}}\,\diff x +p_1 \, \re^{\left(\sigma\,\xi_t\,-\frac{1}{2}\,\sigma^2\,\gamma_{tT}\right)\,\left(1-\gamma_{tT}\right)^{-1}}} \,,
\end{equation}
and it should be evident that one can obtain a closed-form solution. To work this out in detail, it will be convenient to have an expression for the incomplete first moment of a normally-distributed random variable with mean $\mu$ and variance $\nu^2$. Thus we set

\begin{equation}
N_1(x,\mu,\nu)=\frac{1}{\sqrt{2\,\pi\,\nu^2}}\,\int_{-\infty}^x y\,\exp\left({-\frac{1}{2}\,\frac{(y-\mu)^2}{\nu^2}}\right)\diff y\,,
\end{equation}
and for convenience we set 

\begin{equation}
N_0(x,\mu,\nu)=\frac{1}{\sqrt{2\,\pi\,\nu^2}}\,\int_{-\infty}^x \exp\left({-\frac{1}{2}\,\frac{(y-\mu)^2}{\nu^2}}\right)\diff y\,.
\end{equation}
Then we have 

\begin{equation}
    N_1(x,\mu,\nu)=\mu\,N\left(\frac{x-\mu}{\nu}\right)-\frac{\nu\,}{\sqrt{2\,\pi}}\,\exp\left(-\frac{1}{2}\,\frac{(x-\mu)^2}{\nu^2}\right),
\end{equation}
and of course
\begin{equation}
    N_0(x,\mu,\nu)=N\left(\frac{x-\mu}{\nu}\right),
\end{equation}
where $N(\,\cdot\,)$ is defined by \eqref{normal distribution function}.
We also set  

\begin{equation}
f(x,\mu,\nu)=\frac{1}{\sqrt{2\,\pi\,\nu^2}}\,\exp\left(-\frac{1}{2}\,\frac{(x-\mu)^2}{\nu^2}\right).
\end{equation}
Finally, we obtain the following:

\vspace{0.2cm}
\begin{Proposition}
The variance-gamma information-based price of a defaultable discount bond with a uniformly-distributed fraction of the principal paid on recovery is given by 

\begin{align}
S_t=&\re^{-r\,(T-t)}\,\frac{\,p_0 \,\big(N_1(b,{\mu},{\nu})-N_1(a,{\mu},\nu)\big) +p_1\,c\,f(c,\mu,\nu)\, }{p_0 \,\big(N_0(b,{\mu},\nu)-N_0(a,{\mu},\nu)\big) +p_1\,f(c,\mu,\nu)} ,
\end{align}
where 

\begin{equation}
{\mu}=\frac{1}{\sigma}\,\frac{\xi_t}{\gamma_{tT}}\,,\,\quad
\nu=\frac{1}{\sigma}\,\sqrt{\frac{1-\gamma_{tT}}{\gamma_{tT}}}\,.
\end{equation}
\end{Proposition}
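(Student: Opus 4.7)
The plan is to start from the explicit formula \eqref{def bond with recovery 1}, which follows directly from Theorem \ref{thm: value v_t formula} by inserting $h(x)=x$ and the measure $F_{X_T}(\diff x) = p_0\,\mathds{1}_{\{a\leq x<b\}}\,\diff x + p_1\,\delta_c(\diff x)$, and then to reduce the resulting exponent to Gaussian form by completing the square in $x$. The key observation is that the exponent appearing in both the numerator and denominator of \eqref{def bond with recovery 1}, namely
\begin{equation}
E(x) \;=\; \frac{\sigma\,\xi_t\,x - \tfrac{1}{2}\sigma^2 x^2\,\gamma_{tT}}{1-\gamma_{tT}}\,,
\end{equation}
can be rewritten in the form
\begin{equation}
E(x) \;=\; -\frac{1}{2\nu^2}(x-\mu)^2 \;+\; \frac{\xi_t^{\,2}}{2\,\gamma_{tT}(1-\gamma_{tT})}\,,
\end{equation}
with $\mu=\xi_t/(\sigma\gamma_{tT})$ and $\nu^2=(1-\gamma_{tT})/(\sigma^2\gamma_{tT})$, which indeed match the definitions in the statement.

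Next, I factor the common constant $\exp\bigl[\xi_t^{\,2}/(2\gamma_{tT}(1-\gamma_{tT}))\bigr]$ out of every term in both the numerator and denominator of \eqref{def bond with recovery 1}, so that it cancels, leaving only integrals of $e^{-(x-\mu)^2/(2\nu^2)}$ and $x\,e^{-(x-\mu)^2/(2\nu^2)}$ over $[a,b]$, together with a point-mass contribution evaluated at $x=c$. Then I multiply and divide each term by $\sqrt{2\pi\nu^2}$ so as to recognise the standard Gaussian density and its incomplete first moment. By construction of $N_0$, $N_1$ and $f$, this produces
\begin{equation}
\int_a^b e^{-(x-\mu)^2/(2\nu^2)}\diff x \;=\; \sqrt{2\pi\nu^2}\,\bigl(N_0(b,\mu,\nu)-N_0(a,\mu,\nu)\bigr)\,,
\end{equation}
\begin{equation}
\int_a^b x\,e^{-(x-\mu)^2/(2\nu^2)}\diff x \;=\; \sqrt{2\pi\nu^2}\,\bigl(N_1(b,\mu,\nu)-N_1(a,\mu,\nu)\bigr)\,,
\end{equation}
while the contribution from the atom at $c$ takes the form $\sqrt{2\pi\nu^2}\,f(c,\mu,\nu)$ (multiplied by $c$ in the numerator). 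The factor $\sqrt{2\pi\nu^2}$ is common to all terms in both numerator and denominator and cancels, giving the claimed closed-form expression for $S_t$.

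The one spot requiring care is the completion of the square: I need to verify that the coefficient $\sigma^2\gamma_{tT}/(1-\gamma_{tT})$ of $x^2$ reproduces $1/\nu^2$ and that the linear term fixes the mean as $\mu=\xi_t/(\sigma\gamma_{tT})$, which is a direct calculation. The rest is bookkeeping: checking that the additive constant from completing the square is independent of $x$ and hence cancels, checking that the atom at $c$ is handled consistently with the continuous part (the same Gaussian reparametrisation applies pointwise), and noting that integrability of $h(X_T)=X_T$ is immediate because $X_T$ is bounded, so Theorem \ref{thm: value v_t formula} applies without further hypotheses.
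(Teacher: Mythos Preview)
Your proposal is correct and follows essentially the same approach the paper indicates: starting from \eqref{def bond with recovery 1}, completing the square in the exponent to obtain a Gaussian kernel with mean $\mu$ and variance $\nu^2$, and then recognising the integrals over $[a,b]$ and the point mass at $c$ in terms of $N_0$, $N_1$, and $f$. The paper itself does not spell out the completion-of-the-square step but simply asserts that a closed form is obtainable; your write-up fills in precisely those details.
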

\vspace{0.3cm}

\noindent  \textbf {Example 3}. 
Next we consider the case when the payoff  of an asset at time $T$ is log-normally distributed. This will hold if 
$h(x)=\re^{x}$ and $X_T\sim \rm{Normal}(\mu, \nu^2)$. It will be convenient to look at the slightly more general payoff obtained by setting $h(x)=\re^{q\,x}$ with $q \in \mathbb R$. If we recall the identity 

\begin{align}
\frac{1}{\sqrt{2\,\pi}}\, \int_{-\infty}^{\infty} \exp \left(-\frac{1}{2}A x^2 + Bx \right) \rd x = 
\frac{1}{\sqrt{A}}\,\exp \left(\frac{1}{2}\, \frac{\,B^2}{A}\right),
\end{align}
which holds for $A>0$ and $B \in \mathbb R$, a calculation gives 

\begin{align}\label{equation for I(q)}
I_t(q) := \int_{-\infty}^{\infty}\re^{q\,x}\,&\frac{1}{\sqrt{2\,\pi}\,\nu}\,
\exp \left[-\frac{1}{2} \frac{(x-\mu)^2}{\nu^2}\,
+ \frac{1}{1-\gamma_{tT}} \left(\sigma\,\xi_t\,x-\frac{1}{2}\,\sigma^2\,x^2\,\gamma_{tT}\right)\,\right ] \diff x \nonumber 
\\
\quad \quad \quad & = \frac{1}{\nu \sqrt{A_t}}\,\exp \left(\frac{1}{2}\, \frac{\,B_t^2}{A_t}- C \right),
\end{align}
where

\begin{align}
A_t= \frac{1- \gamma_{tT}+ \nu^2 \sigma^2\,\gamma_{tT}}{\nu^2 (1-\gamma_{tT})}\,,\quad
B_t=q+\frac{\mu}{\nu^2}+\frac{\sigma\,\xi_{t}}{1-\gamma_{tT}}\,, \quad
C= \frac{1}{2} \,\frac{\mu^2}{\nu^2}\,.
\end{align}
For $q = 1$, the price is thus given in accordance with Theorem \ref{thm: value v_t formula} by

\begin{align}
S_t = \re^{-r(T-t)}\, \frac{I_t(1)} {I_t(0)}\,.
\end{align}
Then clearly we have

\begin{equation}
    S_0=\re^{-r\,T}\,\exp \left[ \mu+\frac{1}{2}\,\nu^2 \right],
\end{equation}
and a calculation leads to the following:

\vspace{0.2cm}
\begin{Proposition}\label{Proposition 3}
The variance-gamma information-based price of a financial asset with a log-normally distributed payoff such that $\log \left(S_{T^-}\right) \sim {\rm Normal} (\mu,\nu^2)$ is given for $t\in (0,T)$ by 
\begin{align}
S_t&=\re^{r\,t}\,S_0\,\exp\left[{\frac{\nu^2\,\sigma^2\,\gamma_{tT}\,(1-\gamma_{tT})^{-1}}{1+\nu^2\,\sigma^2\,\gamma_{tT}\,(1-\gamma_{tT})^{-1}}\left(\frac{1}{\sigma\,\gamma_{tT}}\,\xi_t-\mu-\frac{1}{2}\,\nu^2\right)}\right] .
\end{align}
\end{Proposition}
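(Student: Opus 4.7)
The plan is to apply Theorem~\ref{thm: value v_t formula} with $h(x)=\re^{x}$ and the Gaussian \emph{a priori} measure $F_{X_T}(\diff x) = (2\pi\nu^2)^{-1/2}\,\exp\!\bigl[-\tfrac{1}{2}(x-\mu)^2/\nu^2\bigr]\,\diff x$. Under this choice the pricing formula reduces to $S_t = \re^{-r(T-t)}\,I_t(1)/I_t(0)$, where $I_t(q)$ is the Gaussian integral already evaluated in equation~\eqref{equation for I(q)}. Since the prefactor $1/(\nu\sqrt{A_t})$ and the constant $C$ appearing in the closed form for $I_t(q)$ are independent of $q$, they cancel in the ratio, leaving
\[
S_t \;=\; \re^{-r(T-t)}\,\exp\!\left[\frac{B_t(1)^2 - B_t(0)^2}{2A_t}\right].
\]

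The second step is to exploit the fact that $B_t(q)$ is affine in $q$ with unit slope: the difference of squares factors as $B_t(1)^2 - B_t(0)^2 = 2B_t(0) + 1$, so the exponent collapses to $B_t(0)/A_t + 1/(2A_t)$. Substituting the definitions of $A_t$ and $B_t(0)$ and collecting terms over the common denominator $D := 1-\gamma_{tT} + \nu^2\sigma^2\gamma_{tT}$, a direct computation gives
\[
\frac{B_t(0)}{A_t} + \frac{1}{2A_t} \;=\; \frac{\mu(1-\gamma_{tT}) + \tfrac{1}{2}\nu^2(1-\gamma_{tT}) + \nu^2\sigma\,\xi_t}{D}.
\]

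The third step is to match this expression with the claimed form. Writing $\lambda := \nu^2\sigma^2\gamma_{tT}(1-\gamma_{tT})^{-1}$, one checks that $1+\lambda = D/(1-\gamma_{tT})$, so that $\lambda/(1+\lambda) = \nu^2\sigma^2\gamma_{tT}/D$ while $1/(1+\lambda) = (1-\gamma_{tT})/D$. Expanding the right-hand side of the claim then yields precisely the expression above, and combining with the identity $\re^{rt}S_0 = \re^{-r(T-t)}\exp[\mu + \tfrac{1}{2}\nu^2]$ (which follows from $S_0 = \re^{-rT}\,\mathbb{E}[\re^{X_T}]$) completes the proof.

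The only real obstacle is the bookkeeping in the final step: the target expression is deliberately structured so that $\mu+\tfrac{1}{2}\nu^2$ appears explicitly, with the residual carrying the distinctive coefficient $\lambda/(1+\lambda)$ multiplying $\xi_t/(\sigma\gamma_{tT}) - \mu - \tfrac{1}{2}\nu^2$. A useful sanity check at the end is that as $t\to 0$ we have $\gamma_{tT}\to 0$, so $\lambda/(1+\lambda)\to 0$, and the formula collapses correctly to $S_0$.
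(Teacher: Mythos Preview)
Your proposal is correct and follows essentially the same route as the paper: both set $S_t=\re^{-r(T-t)}\,I_t(1)/I_t(0)$ with $I_t(q)$ given by \eqref{equation for I(q)}, and both rely on the cancellation of the $q$-independent factors $1/(\nu\sqrt{A_t})$ and $\re^{-C}$ in the ratio. The paper simply writes ``a calculation leads to the following'' at the point where you carry out the difference-of-squares simplification $B_t(1)^2-B_t(0)^2=2B_t(0)+1$ and the subsequent matching via $\lambda/(1+\lambda)$; your argument makes that omitted bookkeeping explicit and is entirely in line with the intended derivation.
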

\vspace{0.2cm}
\noindent More generally, one can consider the  case of a so-called power-payoff derivative for which 

\begin{align}
H_T=\left(S_{T^-}\right)^q\,,
\end{align}
where $S_{T^-}=\lim_{t\to T}S_t$ is the payoff of the asset priced above in Proposition \ref{Proposition 3}. See \cite{Bouzianis Hughston 2019} for aspects of the theory of power-payoff derivatives. In the present case if we write 

\begin{equation}
    C_t=\re^{-r\,(T-t)}\,\mathbb{E}_t\left[\left(S_{T^-}\right)^q\right]
\end{equation}
for the value of the power-payoff derivative at time $t$, we find that 

\begin{equation}
    C_t=\re^{r\,t}\,C_0\,\exp\left[ {\frac{\nu^2\,\sigma^2\,\gamma_{tT}\,(1-\gamma_{tT})^{-1}}{1+\nu^2\,\sigma^2\,\gamma_{tT}\,(1-\gamma_{tT})^{-1}}\left(\frac{q}{\sigma\,\gamma_{tT}}\,\xi_t-q\,\mu-\frac{1}{2}\,q^2\,\nu^2\right)}\right],
\end{equation}
where
 
\begin{equation}
C_0=\re^{-r\,T}\,\exp \left[ q\,\mu+\frac{1}{2}\,q^2\,\nu^2 \right].
\end{equation}
\begin{figure}[H]
    \centering
    \includegraphics[scale=0.48]{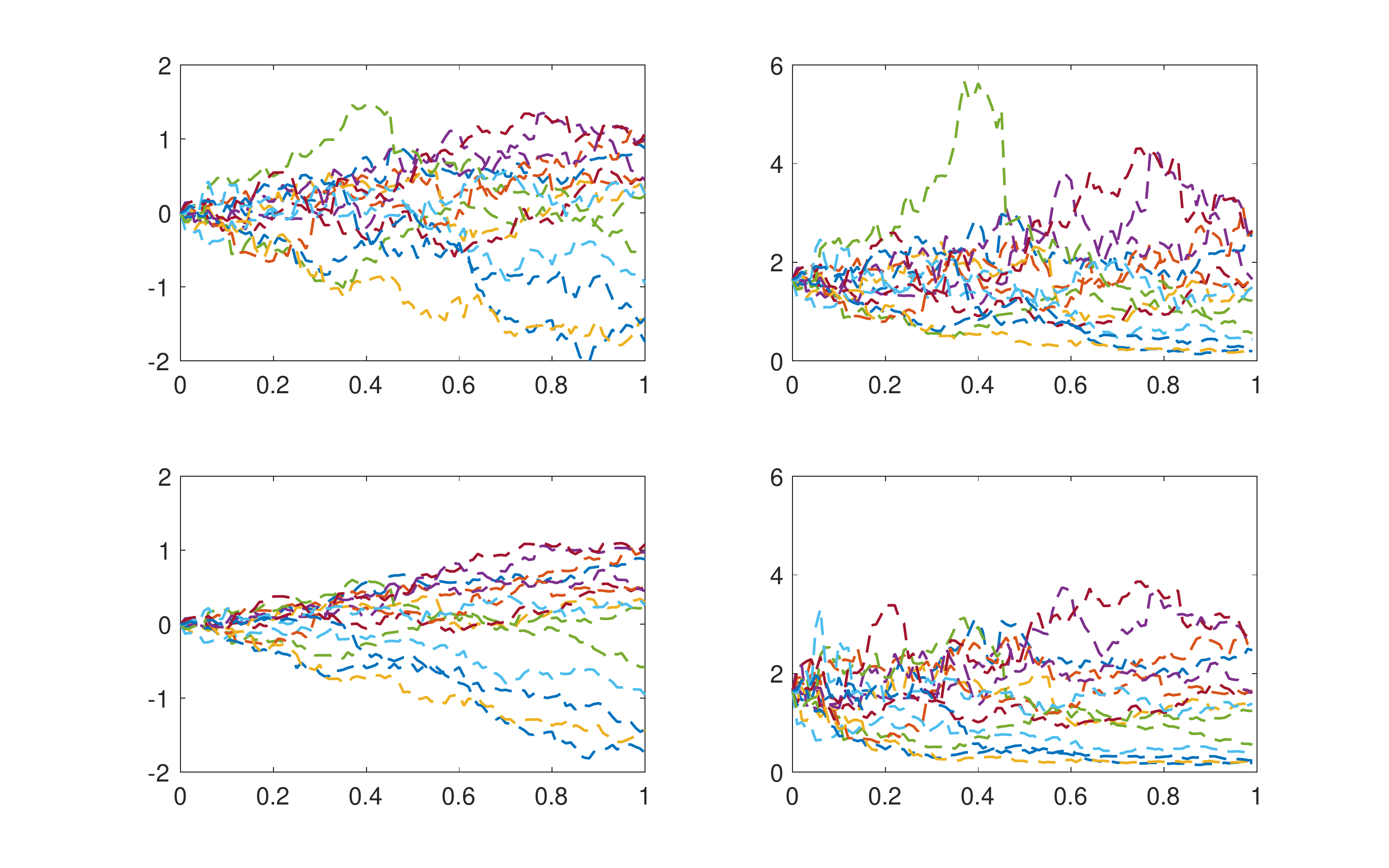}
    \caption{Log-normal payoff. The panels on the left show simulations of the trajectories of the information process, whereas the panels on the right show simulations of the corresponding price process trajectories. From the top to bottom, we show trajectories having $\sigma=1,\,2,$ respectively. The value for $m$ is 100. We take $\mu=0$, $\nu=1$, and show 15 simulated trajectories in each panel.}
    \label{fig:ln 12}
\end{figure}
\begin{figure}[H]
    \centering
    \includegraphics[scale=0.48]{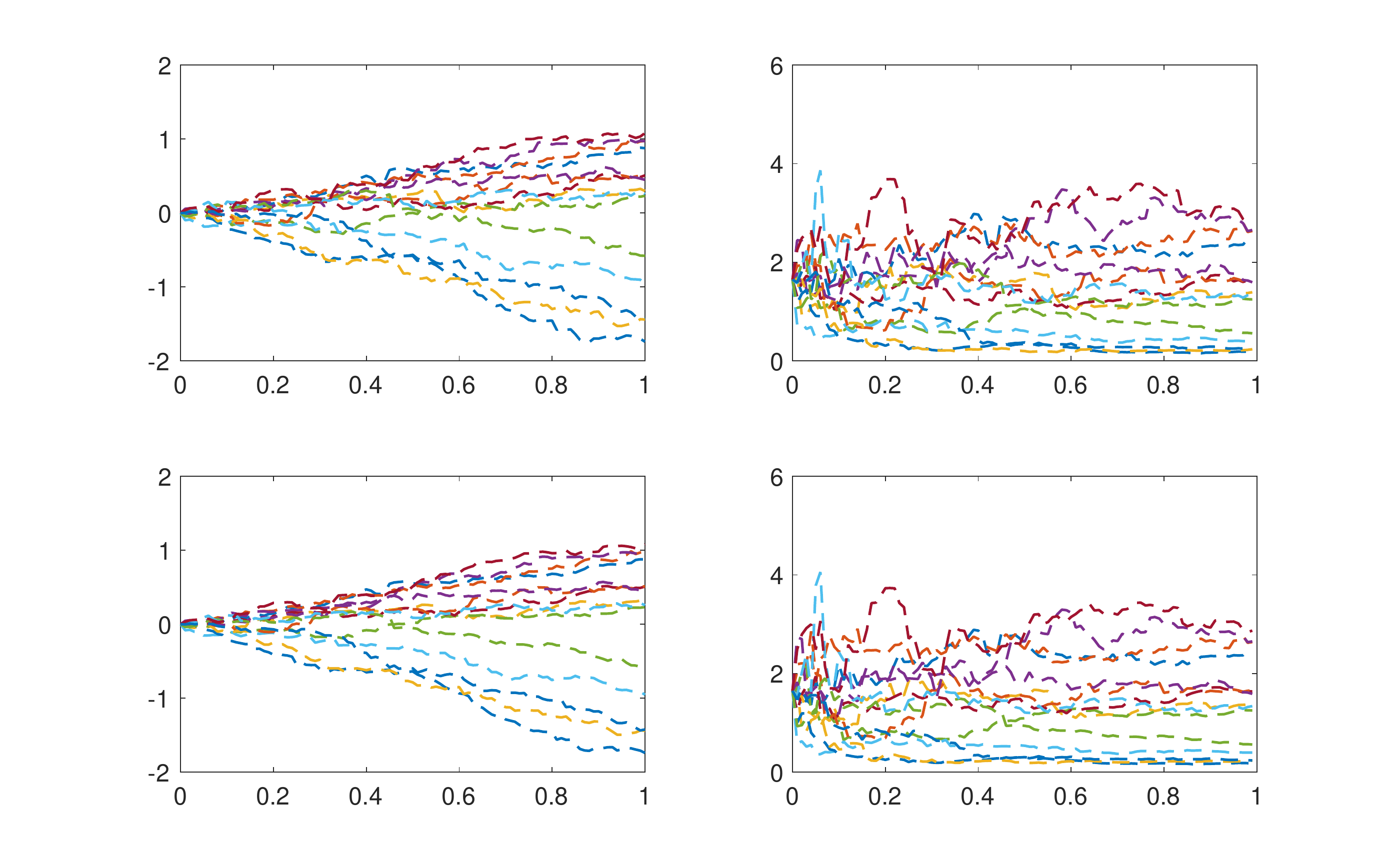}
    \caption{Log-normal payoff. From the top row to the bottom, we show trajectories having $\sigma=3,\,4,$ respectively. The other parameters are the same as those in Figure \ref{fig:ln 12}.}
    \label{fig:ln 34}
\end{figure}
%
 \noindent \textbf {Example 4.} Next we consider the case where the payoff is exponentially distributed. We let $X_T\sim \text{exp}(\lambda)$, so $\mathbb{P}\left[X_T\in\diff x\right]=\lambda\,\re^{-\lambda\,x}\,\diff x$, and take $h(x)=x$. A calculation shows that 

\begin{align}\label{eq: Close form integral 1 exp}
\int_{0}^{\infty}&{x}\,\exp{\left[-\lambda\,x+\left(\sigma\,\xi_t\,x-\frac{1}{2}\,\sigma^2\,x^2\,\gamma_{tT}\right)\,\left(1-\gamma_{tT}\right)^{-1}\right]}\,\diff x = \frac{{\mu}-N_1(0,{\mu},\nu)}{f(0,\mu,\nu)}\,,
\end{align}
where we set

\begin{align}
{\mu}=\frac{1}{\sigma}\,\frac{\xi_t}{\gamma_{tT}}-\frac{\lambda}{\sigma^2}\,\frac{1-\gamma_{tT}}{\gamma_{tT}}\,,\quad
\nu=\frac{1}{\sigma}\,\sqrt{\frac{1-\gamma_{tT}}{\gamma_{tT}}}\,,
\end{align}
and

\begin{align}\label{eq: Close form integral 2 exp}
\int_{0}^{\infty}&\exp{\left[-\lambda\,x+\left(\sigma\,\xi_t\,x-\frac{1}{2}\,\sigma^2\,x^2\,\gamma_{tT}\right)\,\left(1-\gamma_{tT}\right)^{-1}\right]}\,\diff x = \frac{1-N_0(0,{\mu},\nu)}{f(0,\mu,\nu)}\, .
\end{align}
\noindent As a consequence we obtain:

\vspace{0.2cm}
\begin{Proposition}
The variance-gamma information-based price of a financial asset with an exponentially distributed  payoff is given by
\begin{align}
S_t&=\frac{{\mu}-N_1(0,{\mu},\nu)}{1-N_0(0,{\mu},\nu)}\,,
\end{align}
where $N_0$ and $N_1$ are defined as in Example 2. 
\end{Proposition}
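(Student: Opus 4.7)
The plan is to apply Theorem \ref{thm: value v_t formula} with $h(x)=x$ and the \textit{a priori} measure $F_{X_T}(\diff x)=\lambda\,\re^{-\lambda x}\,\mathds{1}_{\{x>0\}}\,\diff x$. Substituting into \eqref{eq: value v_t formula}, the constant factor $\lambda$ appears in both the numerator and the denominator and cancels, and what remains is exactly the ratio of the two integrals displayed in \eqref{eq: Close form integral 1 exp} and \eqref{eq: Close form integral 2 exp}. So the entire task reduces to evaluating those two integrals in closed form (the discount factor being absorbed into the convention $r=0$ implicit in the statement of the Proposition).

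The key manipulation is to complete the square in the exponent. Writing
\[
-\lambda x + \bigl(\sigma\,\xi_t\,x - \tfrac{1}{2}\sigma^2 x^2 \gamma_{tT}\bigr)(1-\gamma_{tT})^{-1} = -\tfrac{1}{2}A\,x^2 + B\,x,
\]
with $A=\sigma^2\gamma_{tT}/(1-\gamma_{tT})$ and $B=\sigma\,\xi_t/(1-\gamma_{tT})-\lambda$, one recognises this as a shifted Gaussian quadratic equal to $-(x-\mu)^2/(2\nu^2)+\mu^2/(2\nu^2)$ where $\nu^2=1/A$ and $\mu=B/A$. A short direct computation verifies that these values of $\mu$ and $\nu$ coincide with those given in the statement; in particular the $-\lambda$ appearing in $B$ is precisely what produces the additional term $-\lambda(1-\gamma_{tT})/(\sigma^2\gamma_{tT})$ in $\mu$ that distinguishes this example from the uniformly distributed case of Example 2.

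With that identification in hand, the integrands in \eqref{eq: Close form integral 1 exp} and \eqref{eq: Close form integral 2 exp} become $\re^{\mu^2/(2\nu^2)}$ times Gaussian kernels restricted to $(0,\infty)$. Applying the definitions of $N_0$ and $N_1$, the half-line integrals evaluate to $\sqrt{2\pi\nu^2}\bigl(1-N_0(0,\mu,\nu)\bigr)$ and $\sqrt{2\pi\nu^2}\bigl(\mu-N_1(0,\mu,\nu)\bigr)$ respectively, and together with the identity $\sqrt{2\pi\nu^2}\,\re^{\mu^2/(2\nu^2)}=1/f(0,\mu,\nu)$ this yields the closed forms asserted in \eqref{eq: Close form integral 1 exp} and \eqref{eq: Close form integral 2 exp}. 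Taking their ratio, the common factor $1/f(0,\mu,\nu)$ cancels and the formula stated in the Proposition drops out. The only real obstacle is bookkeeping: one must keep careful track of how the $-\lambda x$ term is absorbed into the Gaussian mean and how the additive constant $\mu^2/(2\nu^2)$ generated by the quadratic completion is exactly absorbed by $f(0,\mu,\nu)$ in the normalisation.
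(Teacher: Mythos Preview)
Your proposal is correct and follows essentially the same route as the paper: apply Theorem~\ref{thm: value v_t formula} with $h(x)=x$ and the exponential law, reduce to the ratio of the two integrals \eqref{eq: Close form integral 1 exp} and \eqref{eq: Close form integral 2 exp}, and evaluate those by completing the square and recognising the resulting half-line Gaussian integrals in terms of $N_0$, $N_1$, and $f$. The paper records only the outcome of the integral evaluation (``A calculation shows that \ldots''), whereas you have spelled out the completion-of-the-square bookkeeping explicitly; your observation that the missing discount factor in the stated formula amounts to an implicit $r=0$ convention is also apt.
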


\section{Conclusion}
\noindent In the examples considered in the previous section, we have looked at the situation where there is a single market factor $X_T$, which is revealed at time $T$, and where the single cash flow occurring at $T$ depends on the outcome for $X_T$. The value of a security $S_t$ with that cash flow is determined by the information available at time $t$. Given the Markov property of the extended information process $\{\xi_t,\,\gamma_{tT}\}$ it follows that there exists a function of three variables $F:\mathbb{R}\times[0,1]\times\mathbb{R}^+\to \mathbb{R}^+$  such that $S_t=F(\xi_t,\,\gamma_{tT},\,t)$, and we have worked out this expression explicitly for a number of  different cases, given in Examples 1-4. The general valuation formula is presented in Theorem \ref{thm: value v_t formula}. 

It should be evident that once we have specified the functional dependence of the resulting asset prices on the extended information process, then we can back out values of the information process and the gamma bridge from the price data. So in that sense the process $\{\xi_t,\,\gamma_{tT}\}$ is 
``visible'' in the market, and can be inferred directly, at any time, from a suitable collection of prices. This means, in particular, that given the prices of a certain minimal collection of assets in the market, we can then work out the values of other assets in the market, such as derivatives. In the special case we have just been discussing, there is only a single market factor; but one can see at once that the ideas involved readily extend to the situation where there are multiple market factors and multiple cash flows, as one expects for general securities analysis, following the principles laid out in references  \cite{BHM2007, BHM2008}, where the merits and limitations of modelling in an information-based framework are discussed in some detail. 

The potential advantages of working with the variance-gamma  information process, rather than the highly tractable but more limited Brownian information process should be evident -- these include the additional parametric freedom in the model, with more flexibility in the distributions of returns, but equally important, the scope for jumps.  It comes as a pleasant surprise that the resulting formulae are to a large extent analytically explicit, but this is on account of the remarkable properties of the normalized variance-gamma bridge process that we have exploited in our constructions. Keep in mind that in the limit as the parameter $m$ goes to infinity our model reduces to that of the Brownian bridge information-based model considered in \cite{BHM2007, BHM2008}, which in turn contains the standard geometric Brownian motion model (and hence the Black-Scholes option pricing model) as a special case. In the case of a single market factor $X_T$, the  distribution of the random variable $X_T$ can be inferred by observing the current prices of derivatives for which the payoff is of the form

\begin{equation}
H_T =  \re^{rT} \mathds 1_{X_T \leq K} ,
\end{equation}
for $K \in \mathbb R$. The information flow-rate parameter $\sigma$ and the shape parameter $m$ can then be inferred from option prices. When multiple factors are involved, similar calibration methodologies are applicable.

\begin{acknowledgments}
\noindent
The authors wish to thank G.~Bouzianis and J.~M.~Pedraza-Ram\'irez for useful discussions. LSB  acknowledges  support  from (a) Oriel College, Oxford, (b) the Mathematical Institute, Oxford, (c) Consejo Nacional de Ciencia y Tenconolog\'ia (CONACyT), Ciudad de M\'exico, and (d) LMAX Exchange, London. We are grateful to the anonymous referees for a number of helpful comments and suggestions.

\end{acknowledgments}

\vspace{0.4cm}
\noindent {\bf References}
\begin{enumerate}

\bibitem{AS1970}
Abramowitz,~M. \& Stegun,~I.~A., eds.~(1972) {\em Handbook of Mathematical Functions with Formulas, Graphs, and Mathematical Tables}. United State Department of Commerce, National Bureau of Standards, Applied Mathematics Series  55. 

\bibitem{Bouzianis Hughston 2019} 
Bouzianis,~G.~\& Hughston,~L.~P.~(2019)  Determination of the L\'evy Exponent in Asset Pricing Models.
\emph{International Journal of Theoretical and Applied Finance} {\bf{22}} (1),  1950008.

\bibitem{BHM2007} 
Brody,~D.~C., Hughston,~L.~P.~\& Macrina, A.~(2007) Beyond
Hazard Rates: a New Framework for Credit-Risk Modelling. In {\it
Advances in Mathematical Finance}
(M.~C.~Fu, R.~A.~Jarrow, J.-Y.~J. Yen \& R.~J.~Elliot, eds.)  Basel: Birkh\"auser.

\bibitem{BHM2008} 
Brody,~D.~C.,~Hughston,~L.~P.~\& Macrina,~A.~(2008a)  Information-Based Asset Pricing.
\emph{International Journal of Theoretical and Applied Finance} {\bf{11}} (1), \penalty0 107-142{\natexlab{b}}.

\bibitem{BHM2008dam} Brody,~D.~C.,~Hughston,~L.~P.~\& Macrina,~A.~(2008b)
Dam Rain and Cumulative Gain.~{\em Proceeding of the Royal Society} A {\bf 464},
1801-1822.

\bibitem{BDF2009} 
Brody,~D.~C., Davis,~M.~H.~A., Friedman,~R.~L.~\& Hughston,~L.~P.~(2009) Informed Traders. {\em Proceedings of the Royal Society} A \textbf{465},
1103-1122.

\bibitem{BHM2010} 
Brody,~D.~C.,~Hughston,~L.~P.~\& Macrina,~A.~(2010) Credit Risk, Market
Sentiment and Randomly-Timed Default. In {\em Stochastic Analysis in 2010} (D. Crisan, ed.)
Berlin: Springer-Verlag.

\bibitem{BHM2011} 
Brody,~D.~C.,~Hughston,~L.~P.~\& Macrina,~A.~(2011) Modelling 
Information Flows in Financial Markets. In {\em Advanced Mathematical 
Methods for Finance} (G. Di~Nunno \& B. {\O}ksendal, eds.) Berlin: Springer-Verlag.

\bibitem{Carr Geman Madan Yor 2002} 
Carr,~P.,~Geman,~H.,~Madan,~D.~B.~\& Yor,~M.~(2002)  The Fine Structure of Asset Returns: an Empirical Investigation.
\emph{Journal of Business} {\bf 75} (2), \penalty0 305-332.

\bibitem{Emery Yor 2004} \'Emery, M.~\& Yor, M.~(2004) A Parallel between Brownian Bridges and Gamma Bridges.~{\em Publications of the Research Institute for Mathematical Sciences, Kyoto University}  \textbf{40}, 669-688.

\bibitem{FHM2012} 
Filipovi\'c,~D., Hughston,~L.~P. \& Macrina,~A.~(2012) Conditional Density Models for Asset Pricing. {\em International Journal of Theoretical and Applied Finance} \textbf{15} (1), 1250002. 

\bibitem{hoyle2010} Hoyle, E.~(2010) \textit{Information-Based Models for Finance and Insurance.} {PhD Thesis, Imperial College London}.

\bibitem{HHM2012} 
 Hoyle,~E.,~Hughston,~L.~P.~\& Macrina,~A.~(2011) 
 L\'evy Random Bridges
and the Modelling of Financial Information. {\em Stochastic Processes and
their Applications} \textbf{121}, 856-884.

\bibitem{HHM2015} 
 Hoyle,~E.,~Hughston,~L.~P.~\& Macrina,~A.~(2015) Stable-1/2 Bridges and Insurance. In {\em Advances in Mathematics of Finance} (A.~Palczewski \& L.~Stettner, eds.)  Banach Center Publications \textbf{104}, 95-120. Warsaw:  Polish Academy of Sciences.
 
 \bibitem{HMM2020} 
 Hoyle,~E., Macrina,~A.~\& Meng\"ut\"urk, L.~A.~(2020) Modulated Information Flows in Financial Markets. {\em International Journal of Theoretical and Applied Finance} \textbf{23} (4), 2050026. 

\bibitem{HM2012} 
Hughston,~L.~P.~\& Macrina,~A.~(2012) Pricing Fxed-Income Securities in an Information-Based Famework. {\em Applied Mathematical Finance} \textbf{19}  (4), 361-379.

\bibitem{Karatzas Shreve} Karatzas,~I.~\& Shreve, S.~E.~(1998) {\em
Methods of Mathematical Finance}.
New York: Springer-Verlag.

\bibitem{kyprianou2014fluctuations} Kyprianou, A.~E.~(2014) {\em
Fluctuations of L\'evy Processes with Applications}, second edition.
Berlin: Springer-Verlag.

\bibitem{Macrina2006} Macrina, A.~(2006) \textit{An Information-based Framework for Asset Pricing: X-factor Theory and its Applications}. {PhD Thesis, King's College London}.

\bibitem{MS2019}
Macrina,~A. \& Sekine,~J.~(2019) Stochastic Modelling with Randomized Markov Bridges. {\em Stochastics} \textbf{19}, 1-27.

\bibitem{Madan 1990} Madan, D.~\& Seneta, E.~(1990) The Variance Gamma (VG)
Model for Share Market
Returns.~{\em Journal of Business} \textbf{63}, 511-524.

\bibitem{Madan Milne 1991} Madan, D.~\& Milne, F.~(1991) Option Pricing with VG Martingale Components. {\em Mathematical Finance} \textbf{1} (4), 39-55.

\bibitem{Madan Carr Chang 1998} Madan, D.,~Carr, P.~\& Chang, E.~C.~(1998) The Variance
Gamma Process and Option Pricing. {\em European Finance Review} {\bf 2}, 79-105.

\bibitem{Menguturk 2013} Meng\"ut\"urk, L.~A.~(2013) \textit{Information-Based Jumps, Asymmetry and Dependence in Financial Modelling.} {PhD Thesis, Imperial College London}.

\bibitem{Menguturk 2018} Meng\"ut\"urk, L.~A.~(2018) Gaussian Random Bridges and a Geometric Model for Information Equilibrium.~{\em Physica} A {\bf 494}, 465-483.

\bibitem{Rutkowski Yu} 
Rutkowski,~R. \& Yu,~N.~(2007)  An Extension of the Brody-Hughston-Macrina Approach to Modeling of Defaultable Bonds.~{\em International Journal of Theoretical and Applied Finance} {\bf{10}} (3), \penalty0 557-589{\natexlab{b}}.

\bibitem{williams1991} Williams,~D.~(1991) {\em
Probability with Martingales}. Cambridge University Press.

\bibitem{Yor 2007} 
Yor, M.~(2007)
Some Remarkable Properties of Gamma Processes. In {\it
Advances in Mathematical Finance}
(M.~C.~Fu, R.~A.~Jarrow, J.-Y.~J. Yen \& R.~J.~Elliot, eds.)  Basel: Birkh\"auser.

\end{enumerate}



\end{document}